\newtheorem{theorem}{Theorem}
\newtheorem{lemma}{Lemma}
\def\>{\rangle}
\def\<{\langle}
\def\poly{{\rm poly}}
\def\O{\mathcal{O}}
\def\Tr{\text{Tr}}
\def\E{\mathbb{E}}
\def\CX{\text{CX}}
\def\CZ{\text{CZ}}
\begin{document}

\title{Toward Trainability of Deep Quantum Neural Networks}

\author{Kaining Zhang$^1$}
\email{kzha3670@uni.sydney.edu.au}
\author{Min-Hsiu Hsieh$^2$}
\email{min-hsiu.hsieh@foxconn.com}
\author{Liu Liu$^1$}
\email{liu.liu1@sydney.edu.au}
\author{Dacheng Tao$^{3}$}
\email{dacheng.tao@sydney.edu.au}
\affiliation{$^1$School of Computer Science, Faculty of Engineering, University of Sydney, Australia}
\affiliation{$^2$Hon Hai Quantum Computing Research Center, Taipei, Taiwan}

\begin{abstract}
Quantum Neural Networks (QNNs) with random structures have poor trainability due to the exponentially vanishing gradient as the circuit depth and the qubit number increase. This result leads to a general belief that a deep QNN will not be feasible.  In this work, we provide the first viable solution to the vanishing gradient problem for deep QNNs with theoretical guarantees. Specifically, we prove that for circuits with controlled-layer architectures, the expectation of the gradient norm can be lower bounded by a value that is independent of the qubit number and the circuit depth. Our results follow from a careful analysis of the gradient behavior on parameter space consisting of rotation angles, as employed in almost any QNNs, instead of relying on impractical 2-design assumptions.  We explicitly construct examples where only our QNNs are trainable and converge, while others in comparison cannot. 
\end{abstract}

\date{\today}
\maketitle

\section{Introduction} \label{tqnn_intro}

Neural Networks \cite{hecht1992theory} using gradient-based optimizations have dramatically advanced research in discriminative models, generative models, and reinforcement learning. To efficiently utilize parameters and practically improve trainability, neural networks with dedicated architectures \cite{lecun2015deep} are introduced for different tasks, including convolutional neural networks \cite{krizhevsky2012imagenet} for image tasks, recurrent neural networks \cite{zaremba2014recurrent} for time series analysis, and graph neural networks \cite{scarselli2008graph} for tasks related to graph-structured data. Recently, neural architecture search \cite{elsken2019neural} was proposed to improve the performance of the networks by optimizing their neural structures. 

Despite the success in many fields, the development of neural network algorithms is still prone to limitations caused by the large computational resources required for model training. In recent years, quantum computing has emerged as a promising approach to this problem, and has evolved into a new interdisciplinary field known as quantum machine learning (QML) \cite{biamonte2017quantum, havlivcek2019supervised}. Specifically, variational quantum circuits \cite{benedetti2019parameterized} have been explored as efficient protocols for quantum chemistry \cite{kandala2017hardware} and combinatorial optimizations \cite{zhou2018quantum, 2021LeeProgress}. Compared to the classical circuit models, quantum circuits have greater expressive power \cite{Du_2020, shepherd2009temporally} and proven quantum advantage \cite{bravyi2018quantum}. Due to their robustness against noises~\cite{PhysRevResearch.3.023153}, variational quantum circuits have attracted significant interest for the hope of achieving the practical quantum supremacy on near-term quantum computers~\cite{preskill2018quantum, arute2019quantum}.

Quantum Neural Networks (QNNs) \cite{schuld2020circuit, beer2020training} are the special kind of quantum-classical hybrid algorithms that run on trainable quantum circuits. Recently, small-scale QNNs have been implemented on real quantum computers \cite{havlivcek2019supervised, PhysRevApplied.16.024051} for supervised learning tasks. Inspired by the classical optimizations of neural networks, a natural strategy to train QNNs is to exploit the gradient of the loss function \cite{PhysRevA.98.032309, PhysRevLett.118.150503}. However, recent works \cite{mcclean2018barren} show that $N$-qubit quantum circuits with random structures and large depth ($D=\poly(N)$) form approximate unitary $2$-designs \cite{harrow2009random}, and the partial derivative vanishes to zero exponentially with respect to $N$. 

The vanishing gradient problem is usually referred to as the \emph{Barren Plateau} \cite{mcclean2018barren}, which could affect the trainability of QNNs in three folds. 1) Simply using a gradient-based method like Stochastic Gradient Descent (SGD) to train the QNN takes a large number of iterations. 2) The estimation of derivatives requires an extremely large number of samples from the quantum output to guarantee a relatively accurate direction \cite{chen2018gradnorm}. 
3) The objective function tends to have a flat surface \cite{mcclean2018barren} with globally vanishing gradients, which hardly have practical implications.

\subsection{Related Works} \label{tqnn_pre_related_works}

\begin{table*}[t]
\centering
\caption{Existing theoretical results of the barren plateau problem. We use $N$ and $D$ to denote the number of qubits and the circuit depth. We denote by $m$ the number of qubits acted on by local gates. We use $\epsilon$ to measure the distance between circuit states and Haar random states~\cite{holmes2021connecting}. We denote by $q<1$ the noise parameter.}
 \begin{tabular}{|c|c|c|} 
 \hline
 Structure & Distribution & Bounds on the gradient \\ 
 \hline
 Random circuit \cite{mcclean2018barren} & Global $2$-design & $\text{Var} \left[ \frac{\partial f}{\partial \theta_k} \right] \approx \O(2^{-3N})$  \\ 
 Alternating layered circuit \cite{cerezo2020cost} & Local $2$-design & $\text{Var} \left[ \frac{\partial f}{\partial \theta_k} \right] \geq \O(2^{-2m(D+2)})$  \\
QCNN \cite{pesah2020absence} & Local $2$-design & $\text{Var} \left[ \frac{\partial f}{\partial \theta_k} \right] \geq \O(\frac{1}{\poly(N)})$  \\
Arbitrary circuit \cite{holmes2021connecting} & Circuit distribution & $\text{Var} \left[ \frac{\partial f}{\partial \theta_k} \right] \leq 2^{-\O(N)} + \O(\epsilon^2)$ \\
$D$-layer noisy circuit \cite{wang2021noise} & - & $|\frac{\partial f}{\partial \theta_k}| \leq \O(N^{1/2} q^D )$ \\ 
\hline
\end{tabular}
\label{intro_table}
\end{table*}

The barren plateau phenomenon of QNNs is first reported in random ansatzes \cite{mcclean2018barren}. Specifically, for $N$-qubit random quantum circuits with global $2$-design distributed unitaries, the expectation of the derivative to the objective function is zero, and the variance of the derivative vanishes to zero with the rate exponential in the number of qubits. Given the fact that random circuits form approximate $2$-designs~\cite{harrow2009random}, QNNs with random ansatzes are essentially untrainable. Subsequently, Refs.~\cite{cerezo2020cost, 2021Uvarovlocality} relax the global $2$-design assumption to the local case and prove the trainability of shallow alternating layered circuits with local observables. Their results imply that in the low-depth $D=\O(\log N)$ case, the norm of the gradient could be $\frac{1}{\poly (N)}$, which is better than previous exponential vanishing results. Similar results hold for the quantum convolutional neural network (QCNN)~\cite{pesah2020absence} which adopts a tree tensor structure~\cite{zhang2020toward} with $\log N$ depth. Besides, most existing works show a negative relationship between trainability and circuit complexity. Specifically, states generated from QNNs, which satisfy the volume law \cite{PRXQuantum.2.040316}, could lead to the barren plateau landscape. QNNs with high expressibility \cite{holmes2021connecting, PhysRevLett.126.190501} could also suffer from the vanishing gradient problem. For the depolarizing channel case, noisy QNNs with linear depth could have an exponentially small gradient \cite{wang2021noise}. The vanishing gradient phenomenon is inherently related to the concentration nature of measure for quantum states \cite{muller2011concentration}, which cannot be solved using gradient-free \cite{Arrasmith2021effectofbarren} or higher-order \cite{2021CerezoHigher} methods. We summarize existing theoretical results about the barren plateau problem in Table~\ref{intro_table}.

Recently, some techniques have been proposed to address the barren plateau problem, including the special initialization strategy \cite{grant2019initialization} and the layerwise training method \cite{skolik2020layerwise}. We remark that these techniques rely on the assumption of low-depth quantum circuits. Specifically, Ref.~\cite{grant2019initialization} initializes parameters such that the initial quantum circuit is equivalent to an identity matrix ($D=0$). Ref.~\cite{skolik2020layerwise} trains subsets of parameters for each layer so that a low-depth circuit is optimized during the training.

Notice that almost all the theoretical analysis on the trainability of QNNs is based on the assumption that the learning circuits satisfy the unitary $2$-designs or the Haar distributions.  However, QNNs in most scenarios are tuned in the \emph{parameter space} consisting of rotation angles followed by CNOT or CZ gates, instead of the entire unitary space \footnote{Although the assumption of a uniform distribution in the parameter space could generate approximate $2$-designs for certain cases~\cite{larocca2021diagnosing} with linear depths, the corresponding analysis requires that all gates are tunable, which does not hold for circuits employing CNOT or CZ gates.}. Thus, earlier claims that \emph{deep} QNNs are doomed to fail are not on solid ground. In the work, we provide a first analysis of the gradient behavior of the quantum neural network within the parameter space. Most importantly, we theoretically prove the existence of deep QNNs whose gradient norm is independent of the circuit depth $D$.

The rest of this paper is organized as follows. Preliminaries of QNNs are introduced in Section~\ref{tqnn_pre}. The theoretical results are presented in Section~\ref{tqnn_main_results}. We show numerical results in Section~\ref{tqnn_experiments}, which includes {the toy model study (Section~\ref{tqnn_exper_toy}),} finding the ground state of the Ising model (Section~\ref{tqnn_exper_ising}), and the binary classification (Section~\ref{tqnn_exper_qml}). We summarize our conclusion in Section~\ref{tqnn_conclusions}.

\begin{figure*}[t]
\centering
\includegraphics[width=.75\linewidth]{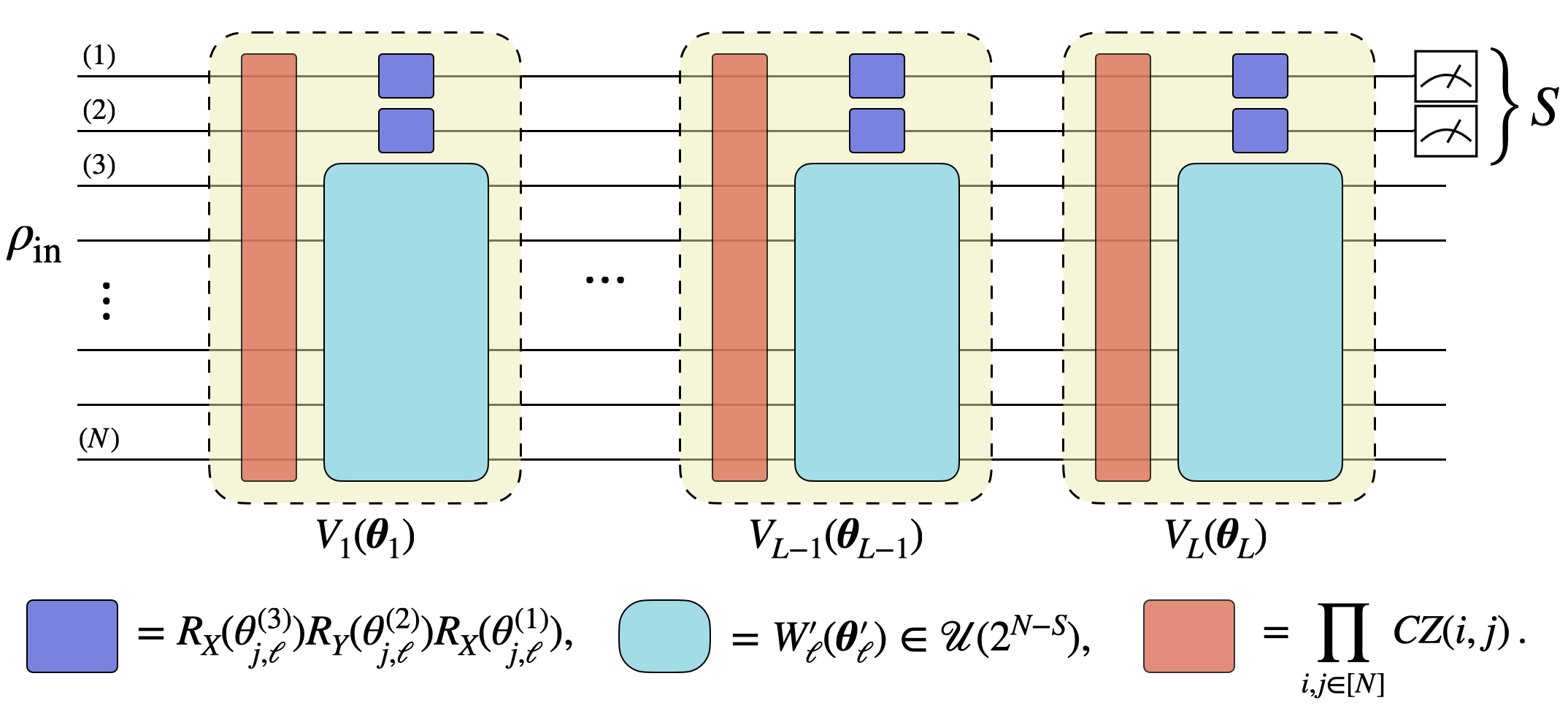}
\caption{The architecture of the $L$ blocks of the controlled-layers quantum neural networks (CL-QNNs)  ($S=2$). Each CL block $V_\ell(\boldsymbol{\theta}_\ell)$ (light yellow block) consists of a CZ operation layer (orange block), single-qubit rotations  {$R_X R_Y R_X$} on the first $S$ qubits (dark blue block), and a parameterized unitary operation $W_\ell'(\boldsymbol{\theta}_\ell')$ on the remaining $N-S$ qubits (light blue block). Operations ${\rm CZ}_\ell$ and $W_\ell'(\boldsymbol{\theta}_\ell')$ could have different structures for different blocks, and the ansatz $W_\ell'(\boldsymbol{\theta}_\ell')$ could be arbitrarily deep. We denote parameters of single-qubit rotations on the $j$th qubit in the $\ell$th CL block by $\boldsymbol{\theta}_{j,\ell} = (\theta_{j,\ell}^{(1)}, \theta_{j,\ell}^{(2)}, \theta_{j,\ell}^{(3)})$. We denote parameters in the $\ell$-th CL block by $\boldsymbol{\theta}_\ell = (\boldsymbol{\theta}_{1,\ell}, \cdots, \boldsymbol{\theta}_{S,\ell}, \boldsymbol{\theta}_\ell ')$, where $\boldsymbol{\theta}_\ell'$ denotes the parameters on the remaining $N-S$ qubits.}
\label{tqnn_main_circuit}
\end{figure*}

\section{Preliminary} \label{tqnn_pre}

Quantum neural network is one special kind of variational quantum algorithms with trainabile parameterized quantum circuits. More specifically, we aim to minimize the function $f$ with respect to $\boldsymbol{\theta}$:  
\begin{equation}\label{tqnn_intro_loss_function}
	f(\boldsymbol{\theta}; \rho_{\text{in}}) = \Tr \left[ O V(\boldsymbol{\theta})  \rho_{\text{in}}  V(\boldsymbol{\theta})^\dag  \right],
\end{equation}
where $O$ denotes the quantum observable and $\rho_{\text{in}}$ denotes the density matrix of the input state. We use the linear combination of tensor products of Pauli matrices
\begin{equation*}
\{\sigma_0, \sigma_1, \sigma_2, \sigma_3\} = \{I, X, Y, Z\}
\end{equation*}
to represent the quantum observable. Equation~(\ref{tqnn_intro_loss_function}) provides a general formulation of the loss in QNNs. For the quantum machine learning (QML) scenario, $\rho_{\text{in}}$ encodes the information of the training data, while the observable $O$ could be a simple $\sigma_3$ for classification tasks~\cite{schuld2020circuit,Grant_2018} or a complex form for quantum kernel methods~\cite{huang2021power,Wang2021towards}. For the variational quantum eigensolver scenario, such as the quantum chemistry and the quantum simulation, $\rho_{\text{in}}$ is usually initialized as $(|0\>\<0|)^{\otimes N}$, while $O$ encodes the Hamiltonian of the whole system which is written as the linear combination of Pauli tensor products. For convenience, we define the \emph{locality} of a Pauli product observable as the number of qubits such that the corresponding component in the observable is not the identity.

Different from the input state $\rho_{\text{in}}$ and the observable $O$, parameters $\boldsymbol{\theta}$ in QNN are trainable, which makes the QNN suitable for various optimizers~\cite{doi:10.1126/sciadv.aaw9918, Stokes2020quantumnatural, Sweke2020stochasticgradient}. In practice, we could deploy the parameter as the phase of single-qubit gates $\{e^{-i \theta \sigma_k}, k \in \{1,2,3\}\}$ while employing two-qubit gates $\{\CX, \CZ\}$ among them to generate quantum entanglement. This strategy has been frequently used in existing quantum neural networks  \cite{schuld2020circuit, benedetti2019parameterized, du2020learnability, du2020quantum, du2021exploring, PhysRevApplied.16.024051, du2021accelerating} since the model suits near-term quantum computers. We remark that the partial derivative could be calculated subsequently by using the parameter shifting rule~\cite{PhysRevA.98.032309, PhysRevLett.118.150503},
\begin{equation}\label{tqnn_gradient_calculate}
\frac{\partial f}{\partial \theta_j} = f(\boldsymbol{\theta}_{+};\rho_{\text{in}}) - f(\boldsymbol{\theta}_{-};\rho_{\text{in}}),
\end{equation}
where $\boldsymbol{\theta}_+$ and $\boldsymbol{\theta}_-$ are different from $\boldsymbol{\theta}$ only at the $j$th parameter: $\theta_j \rightarrow \theta_j \pm \frac{\pi}{4}$. 
Thus, the gradient of $f$ could be obtained by estimating quantum observables, which allows the optimization of QNNs using gradient-based methods~\cite{doi:10.1126/sciadv.aaw9918, Stokes2020quantumnatural, Sweke2020stochasticgradient}.

\section{Main results}
\label{tqnn_main_results}

In this section, we introduce quantum neural networks (QNNs) with controlled-layer (CL) architectures and the corresponding trainability results. Namely, we prove that the expectation of the square of the gradient for the CL-QNN is lower bounded by ${\Omega}(8^{-LS})$, where $L$ is the number of CL blocks in QNNs and $S$ is the locality of the quantum observable. Note that the circuit depth in each CL block is arbitrary. The bounds are independent of the circuit depth, i.e., we guarantee the trainability of arbitrary deep CL-QNNs with limited CL blocks. 

\subsection{Controlled-layer QNNs} \label{tqnn_ttn_method}

\begin{figure*}[t]
 \subfigure[]{
  \centering
  \includegraphics[width=.287692\linewidth]{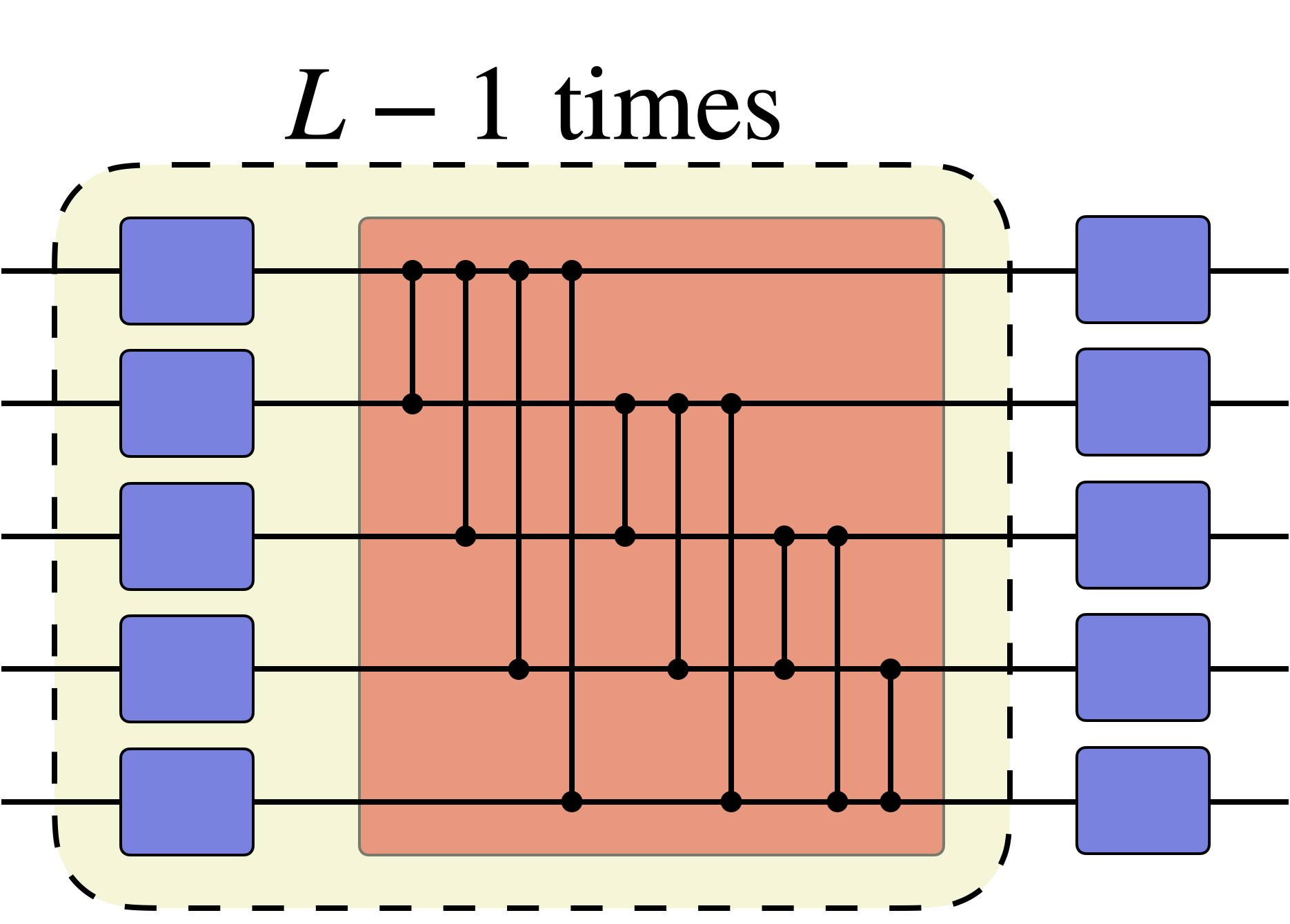}  
  \label{tqnn_ll_circuit}
}
 \subfigure[]{
  \centering
  \includegraphics[width=.206154\linewidth]{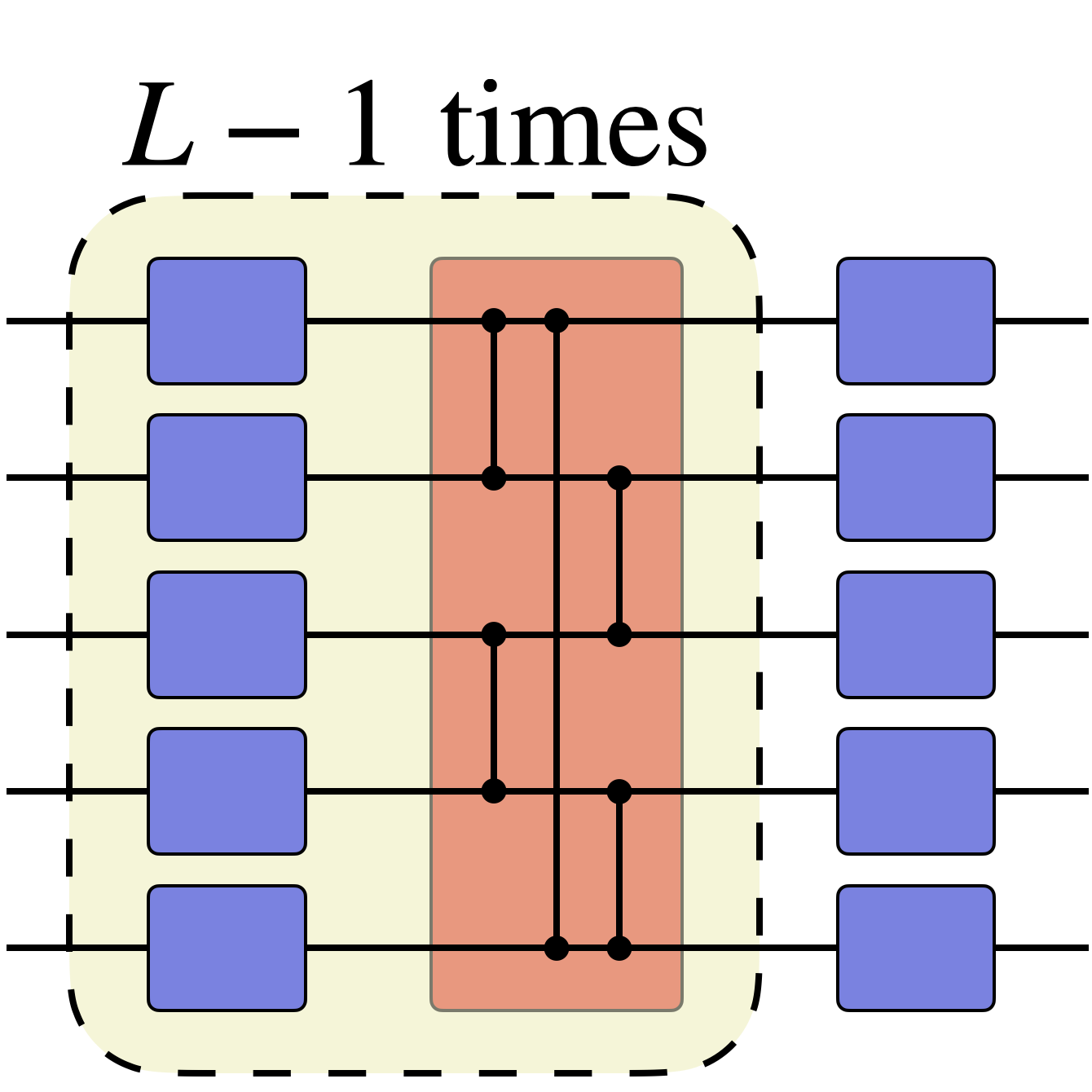}  
  \label{tqnn_qe_circuit}
}
  \subfigure[]{
  \centering
  \includegraphics[width=.3061554\linewidth]{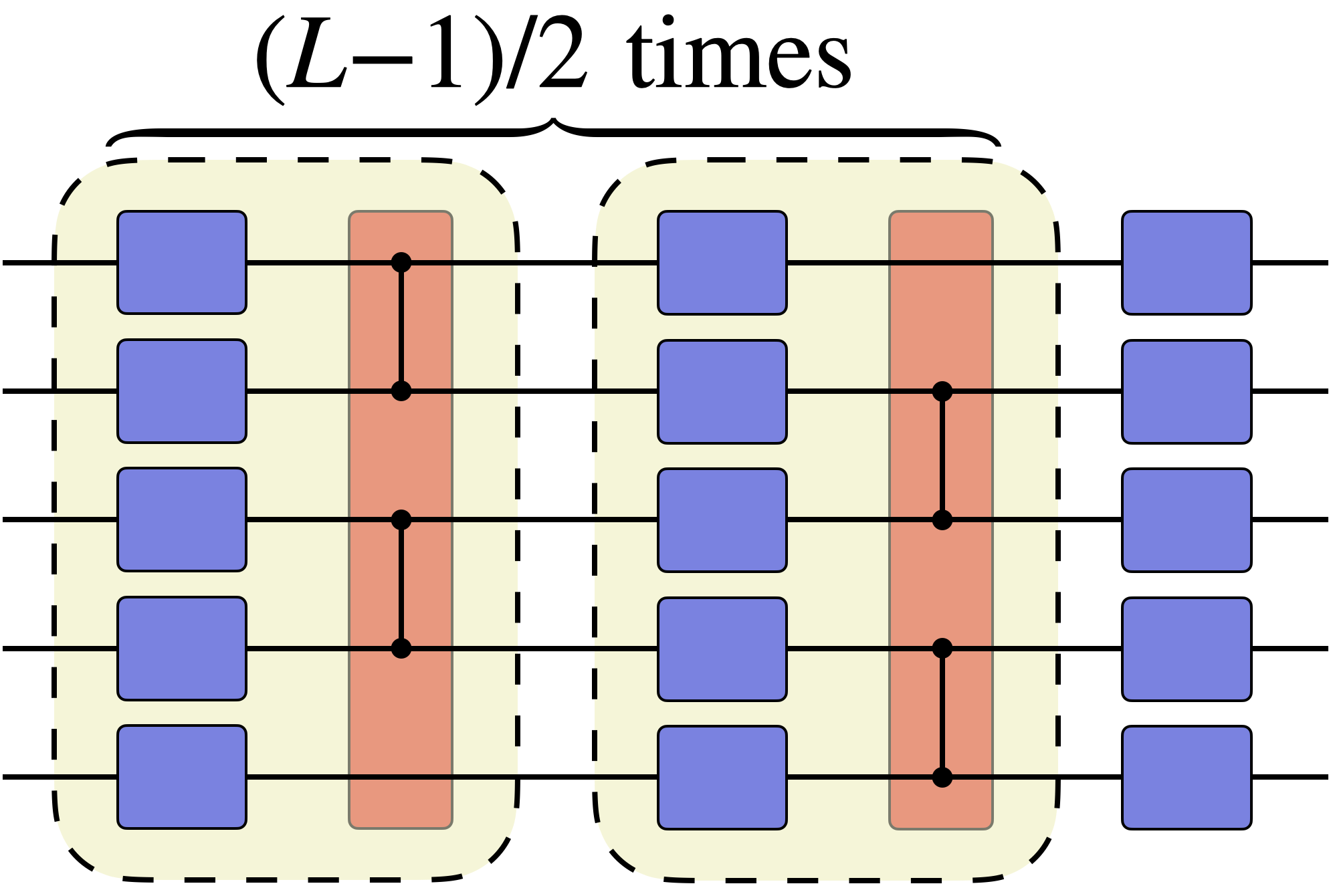}  
  \label{tqnn_al_circuit}
}
\caption{Existing QNNs as special cases of CL-QNNs. Figure~\ref{tqnn_ll_circuit} shows the structure of the layerwise learning circuit~\cite{skolik2020layerwise}. Figure~\ref{tqnn_qe_circuit} shows the structure of the quantum entanglement circuit~\cite{10.1145/3466797, haug2021capacity, haug2021optimal}. Figure~\ref{tqnn_al_circuit} shows the structure of the alternating-layered circuit~\cite{skolik2020layerwise}.  The deep blue block denotes the single-qubit rotations and the orange block denotes the CZ operator layer, as defined in Figure~\ref{tqnn_main_circuit}.
}
\label{tqnn_existing_circuits}
\end{figure*}

Now we discuss the CL-QNNs in detail. The architecture of the CL-QNN is shown in Figure~\ref{tqnn_main_circuit}. The circuit begins with an input state $\rho_{\text{in}}$, followed by $L$ layers of CL blocks. In the $\ell$-th CL block, a CZ operator layer ({denoted as $\text{CZ}_\ell$}), where CZ gates on arbitrary qubit pairs are allowed, is applied. Then we perform parameterized quantum gates on the first $S$ qubits and the remaining $N-S$ qubits separately. For each of the first $S$ qubits, we employ $R_X(\theta_{j,\ell}^{(3)}) R_Y(\theta_{j,\ell}^{(2)}) R_X(\theta_{j,\ell}^{(1)})$ gates sequence to implement arbitrary unitaries in $\mathcal{U}(2)$ individually, where $j\in \{1,\cdots,S\}$ denotes the index of the qubit and $\ell \in \{1,\cdots,L\}$ denotes the index of the CL block. We allow arbitrary parameterized unitaries {$W_\ell ' (\boldsymbol{\theta}_\ell ')\in \mathcal{U}(2^{N-S})$} on the remaining $N-S$ qubits. Thus, the $\ell$-th CL block is defined by
\begin{equation*}
V_\ell(\boldsymbol{\theta}_\ell) = \left( \left( \bigotimes\limits_{j=1}^{S} W_{j,\ell} (\boldsymbol{\theta}_{j,\ell}) \right) \bigotimes W_\ell ' (\boldsymbol{\theta}_\ell ') \right) \text{CZ}_\ell,
\end{equation*}
where $W_{j,\ell} (\boldsymbol{\theta}_{j,\ell}) \equiv R_X(\theta_{j,\ell}^{(3)}) R_Y(\theta_{j,\ell}^{(2)}) R_X(\theta_{j,\ell}^{(1)})$ and $\boldsymbol{\theta}_\ell = (\boldsymbol{\theta}_{1,\ell}, \cdots, \boldsymbol{\theta}_{S,\ell}, \boldsymbol{\theta}_\ell ')$. After all unitary operations, we measure the first $S$ qubits to reveal the information from the quantum circuit. Specifically, the objective function is the expectation of the measurement result with the associated $S$-local observable
\begin{align}
\sigma_{\boldsymbol{i}} &= \sigma_{(i_1, \cdots, i_S, 0, \cdots, 0)} \nonumber \\
&= \sigma_{i_1} \otimes \cdots \otimes \sigma_{i_S} \otimes \sigma_{0} \otimes \cdots \otimes \sigma_{0}, \label{tqnn_observable}
\end{align}
where $i_j \neq 0, \ \forall j \in \{1,\cdots,S\}$.

We remark that the controlled-layer architecture shown in Figure~\ref{tqnn_main_circuit} is a general framework. Arbitrary structures could be employed for operations {${\rm CZ}_\ell$ and $W_\ell'$}, to generate different distributions for specific tasks. Many existing quantum neural networks can be viewed as special cases of CL-QNNs, e.g., the layerwise learning circuit~\cite{skolik2020layerwise} in Figure~\ref{tqnn_ll_circuit}, the quantum entanglement circuit~\cite{10.1145/3466797, haug2021capacity, haug2021optimal} in Figure~\ref{tqnn_qe_circuit}, and the alternating-layered circuit  \cite{cerezo2020cost} in Figure~\ref{tqnn_al_circuit}. Specifically, these architectures contain a CZ  operation layer (orange color) and the arbitrary parameterized unitaries {$W_\ell ' (\boldsymbol{\theta}_\ell ')\in \mathcal{U}(2^{N-S})$} on the remaining $N-S$ qubits are taken to be just single-qubit rotations. Besides, the local quantum observable condition is naturally satisfied for QNNs associated with quantum machine learning~\cite{schuld2020circuit}, combinatorial optimizations~\cite{zhou2018quantum, 2021LeeProgress}, and quantum simulations~\cite{CerveraLierta2018exactisingmodel, Pagano25396}.

\subsection{Trainability of CL-QNNs}

We provide main theoretical results of this work in Theorems~\ref{tqnn_appendix_lemma_cost} and \ref{tqnn_appendix_theorem}. Specifically, we prove that both the square of the loss $f$ and the norm of its gradient are lower bounded by ${{\Omega}}(8^{-LS})$. Both bounds are independent of the qubit number $N$ and the circuit depth $D$, where the latter can be arbitrarily large when employing complex unitaries $W_\ell'$ in Figure~\ref{tqnn_main_circuit}. Proofs of Theorems~\ref{tqnn_appendix_lemma_cost} and \ref{tqnn_appendix_theorem} are provided in Appendices~\ref{tqnn_app_proof_cost} and \ref{tqnn_app_proof_gradient}, respectively. 

\begin{theorem}\label{tqnn_appendix_lemma_cost}
Consider the $N$-qubit $L$-block variational quantum circuit $V(\boldsymbol{\theta})$ defined in Figure~\ref{tqnn_main_circuit} and the cost function $f(\boldsymbol{\theta}) = {\rm Tr} \left[ \sigma_{\boldsymbol{i}} V(\boldsymbol{\theta}) \rho_{\text{in}} V(\boldsymbol{\theta})^{\dag}\right]$, where $\sigma_{\boldsymbol{i}}$ is a $S$-local observable. 
Then the following formula holds,
\begin{equation}\label{tqnn_appendix_lemma_cost_equation}
\mathop{\E}\limits_{\boldsymbol{\theta}} f^2 \geq  \frac{ \left( {\rm Tr}[\sigma_{\boldsymbol{3|i}} \rho_{\rm {in}}] \right)^2 }{8^{LS}},	
\end{equation}
where the index $\boldsymbol{3|i}=(3,3,\cdots,3,0,\cdots,0)$ contains $S$ non-zero components. The expectation is taken independently for all parameters in $\theta$ with the uniform distribution in $[0,2\pi]$.
\end{theorem}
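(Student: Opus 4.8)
The plan is to track how the second moment $\E_{\boldsymbol\theta} f^2$ decomposes block-by-block, peeling off one CL block at a time from the outside. The key observation is that the only parameters that matter for the lower bound are the single-qubit rotation angles on the first $S$ qubits; everything else ($W_\ell'$, the CZ layers, and $\rho_{\rm in}$) can be treated as a fixed (but arbitrary) environment at each stage. So first I would write $V(\boldsymbol\theta) = V_L \cdots V_1$ and expand $f = \Tr[\sigma_{\boldsymbol i}\, V \rho_{\rm in} V^\dag]$ in the Pauli basis: conjugating $\sigma_{\boldsymbol i}$ backwards through $V_L^\dag$ expresses it as a linear combination $\sum_{\boldsymbol j} c_{\boldsymbol j}(\boldsymbol\theta_L)\,\sigma_{\boldsymbol j}$ of Pauli operators, and the point is to understand the averaged squared coefficients $\E_{\boldsymbol\theta_L}|c_{\boldsymbol j}|^2$. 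Since $W_\ell'$ is an \emph{arbitrary} unitary on the last $N-S$ qubits, I cannot say anything useful about how it mixes Pauli strings there; the trick must be to only follow the identity component on those qubits, i.e. restrict attention to the recursion on the support of the $S$-local observable.

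The heart of the argument is a single-block inequality: for one CL block, averaging over the three rotation angles $\boldsymbol\theta_{j,\ell}$ on each of the first $S$ qubits (with the CZ layer and $W_\ell'$ fixed), I want to show $\E\, f^2 \ge \tfrac{1}{8^S}\, (\text{something evaluated one block down})^2$, where that "something" is again of the same form with $\sigma_{\boldsymbol i}$ replaced by $\sigma_{\boldsymbol{3|i}}$ — the all-$Z$ string on the measured qubits. Concretely: the rotation $R_X(\theta^{(3)})R_Y(\theta^{(2)})R_X(\theta^{(1)})$ on qubit $j$, averaged appropriately, maps $\sigma_3$ on that qubit to itself with a coefficient whose square averages to $1/8$ (one factor of $1/2$ per Euler angle, since $\E[\cos^2]=1/2$). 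Pushing $\sigma_3$ through the CZ layer keeps it diagonal, and then through $W_\ell'$ — but $\sigma_3$ on a measured qubit, tensored with identity on the unmeasured qubits, passes $W_\ell'$ trivially only if the CZ layer did not entangle across the cut. Here is where the controlled-layer structure is essential: I would argue that a CZ gate between a measured and an unmeasured qubit, when the measured qubit carries $\sigma_3$, leaves $\sigma_3 \otimes I$ unchanged (since $\sigma_3$ is diagonal and CZ is diagonal), so the $\sigma_{\boldsymbol 3|i}$ component really does stay on the first $S$ qubits and $W_\ell'$ contributes nothing. That reduces the problem on block $\ell$ to the problem on block $\ell-1$ with input state $\rho_{\rm in}^{(\ell-1)} = W'_{\le \ell-1}\cdots \rho_{\rm in}\cdots$ traced appropriately, and observable $\sigma_{\boldsymbol 3|i}$.

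Then I would set up the induction cleanly: define $g_\ell = $ (the coefficient of $\sigma_{\boldsymbol 3|i}$ in the backward-propagated observable after $L-\ell$ blocks have been averaged out), show $\E\, f^2 \ge \tfrac{1}{8^S}\, \E[g_{L-1}^2]$, and iterate $L$ times to get $\E\, f^2 \ge 8^{-LS}\, g_0^2$ with $g_0 = \Tr[\sigma_{\boldsymbol 3|i}\,\rho_{\rm in}]$. The non-negativity needed to drop cross terms in $f^2 = (\sum_{\boldsymbol j} c_{\boldsymbol j}\sigma_{\boldsymbol j}\text{-contributions})^2$ comes from the fact that after averaging over an angle, the cross terms between distinct Pauli strings vanish (orthogonality of characters $\E[e^{i k\theta}]=0$), leaving a sum of squares from which I keep only the $\boldsymbol{3|i}$ term.

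The main obstacle I anticipate is precisely the commutation/no-leakage claim at the cut between the $S$ measured qubits and the $N-S$ arbitrary ones: I need the $\sigma_{\boldsymbol 3|i}$ component to stay confined to the first $S$ qubits so that the arbitrary $W_\ell'$ genuinely drops out and the induction closes with a depth-independent constant. This works for the all-$Z$ string because both CZ and $\sigma_3$ are diagonal, but it would fail for a general $S$-local Pauli — which is exactly why the bound is phrased in terms of $\Tr[\sigma_{\boldsymbol{3|i}}\rho_{\rm in}]$ rather than $\Tr[\sigma_{\boldsymbol i}\rho_{\rm in}]$, and why the averaging over $R_X R_Y R_X$ must be arranged to always funnel weight back onto $\sigma_3$. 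Getting the constant exactly $1/8$ per qubit per block (rather than something weaker) requires being careful that the three Euler-angle averages each contribute an independent factor of $1/2$ and do not interfere; verifying this for the specific $R_X(\theta^{(3)})R_Y(\theta^{(2)})R_X(\theta^{(1)})$ decomposition is the one genuinely computational step.
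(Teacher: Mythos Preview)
Your proposal is correct and follows essentially the same route as the paper: peel off one CL block at a time, use the fact that $W_\ell'$ commutes with any observable that is identity on the last $N-S$ qubits, integrate the three Euler angles on each measured qubit to pick up a factor $1/2$ each (this is exactly the content of the paper's Lemma~\ref{tqnn_lemma_wawbwcwd}, which formalizes your ``orthogonality of characters'' remark), keep only the $\sigma_{\boldsymbol{3|i}}$ term in the resulting sum of squares, and observe that $\sigma_{\boldsymbol{3|i}}$ is diagonal and hence invariant under the CZ layer. One small clarification on ordering: in the block $V_\ell = (\bigotimes_j W_{j,\ell}\otimes W_\ell')\,{\rm CZ}_\ell$, the Heisenberg-picture conjugation hits $W_\ell'$ and the rotations \emph{before} ${\rm CZ}_\ell$, so $W_\ell'$ always sees an observable that is still identity on the unmeasured qubits and drops out immediately---your worry about CZ ``entangling across the cut'' before $W_\ell'$ acts does not arise, though your fallback argument (that $\sigma_3\otimes I$ commutes with CZ regardless) is also correct and is what the paper uses to pass the CZ layer.
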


\begin{theorem}\label{tqnn_appendix_theorem}
Consider the $N$-qubit $L$-block variational quantum circuit $V(\boldsymbol{\theta})$ defined in Figure~\ref{tqnn_main_circuit} and the cost function $f(\boldsymbol{\theta}) = {\rm Tr} \left[ \sigma_{\boldsymbol{i}} V(\boldsymbol{\theta}) \rho_{\text{in}} V(\boldsymbol{\theta})^{\dag}\right]$, where $\sigma_{\boldsymbol{i}}$ is a $S$-local observable. Then the following formula holds,
\begin{align}\label{tqnn_appendix_theorem_eq}
\mathop{\E}\limits_{\boldsymbol{\theta}} \|\nabla_{\boldsymbol{\theta}} f \|^2 \geq \frac{ 12(L-1)S }{8^{LS}} \left( {\rm Tr}[\sigma_{\boldsymbol{3|i}} \rho_{\rm {in}}] \right)^2, 
\end{align}
where index $\boldsymbol{3|i}=(3,3,\cdots,3,0,\cdots,0)$ contains $S$ non-zero components. The expectation is taken independently for all parameters in $\theta$ with the uniform distribution in $[0,2\pi]$.
\end{theorem}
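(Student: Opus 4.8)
\textbf{Proof proposal for Theorem~\ref{tqnn_appendix_theorem}.}

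The plan is to reduce the gradient statement to the already-established lower bound on $\E_{\boldsymbol{\theta}} f^2$ from Theorem~\ref{tqnn_appendix_lemma_cost} by exploiting the parameter-shift rule. First I would pick, for each of the first $S$ qubits in each CL block $\ell \in \{1,\dots,L\}$, the single-qubit rotation angle about which $f$ is a pure sinusoid. Since $W_{j,\ell}(\boldsymbol{\theta}_{j,\ell}) = R_X(\theta_{j,\ell}^{(3)})R_Y(\theta_{j,\ell}^{(2)})R_X(\theta_{j,\ell}^{(1)})$ and each generator is a Pauli with eigenvalues $\pm 1$, the function $f$ depends on any one such angle $\theta$ as $f = a\cos(2\theta+\phi) + b$, so $\partial f/\partial\theta = -2a\sin(2\theta+\phi)$ and hence $(\partial f/\partial\theta)^2 = 4a^2\sin^2(2\theta+\phi)$. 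Averaging over $\theta \in [0,2\pi]$ uniformly gives $\E_\theta (\partial f/\partial\theta)^2 = 2a^2$. On the other hand, averaging $f^2 = a^2\cos^2(2\theta+\phi) + 2ab\cos(2\theta+\phi) + b^2$ over the same $\theta$ gives $\E_\theta f^2 = \tfrac{1}{2}a^2 + b^2$. The crucial observation is that, because all other parameters (and in particular the other two angles in the same triple) are also averaged uniformly, the "constant" $b$ — which is the $\theta$-independent part — itself averages to something controllable. In fact I would choose the angle so that $b$ corresponds to setting $\theta$ to a value making the rotation trivial in the relevant Pauli channel; more cleanly, I would use the fact that $\E_\theta f^2 - \tfrac14 \E_{\theta,\theta'}[(f_{+}-f_{-})^2 \text{-type terms}]$ telescopes.

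Concretely, the cleaner route: for a single angle $\theta=\theta_{j,\ell}^{(m)}$, combine the two identities to get $\E_\theta \big(\tfrac{\partial f}{\partial \theta}\big)^2 = 4\big(\E_\theta f^2 - (\E_\theta f)^2\big) - \text{(lower-order)}$... but the genuinely useful inequality is simply $\E_\theta (\partial f/\partial\theta)^2 = 2a^2 \ge 2(\E_\theta f^2 - b^2)$, and then I need a lower bound on $a^2$ in terms of a quantity that the proof of Theorem~\ref{tqnn_appendix_lemma_cost} already lower-bounds. Here is where I would recycle the structure of that proof: Theorem~\ref{tqnn_appendix_lemma_cost} is proved by repeatedly averaging over the rotation triples and showing that each CL block contributes a factor $1/8^S$, ending at $\big(\Tr[\sigma_{\boldsymbol{3|i}}\rho_{\rm in}]\big)^2$. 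The same averaging, stopped one angle early at a chosen $\theta_{j,\ell}^{(m)}$, shows that the sinusoidal amplitude $a$ associated with that angle has $\E[a^2]$ (expectation over all the remaining parameters) bounded below by essentially the same product, up to the factors already accounted for. Summing $\E(\partial f/\partial\theta_{j,\ell}^{(m)})^2 \ge 2\,\E[a^2]$ over all $(L-1)S$ qubit-angles that survive the telescoping (the first block's angles get absorbed into the $\Tr[\sigma_{\boldsymbol{3|i}}\rho_{\rm in}]$ term, which is why $L-1$ appears, and the factor $3$ comes from the three angles $R_XR_YR_X$ per qubit, giving $3(L-1)S$ terms, each contributing $\ge 4/8^{LS}\cdot(\cdots)^2$ after the $2a^2$ and a residual factor) then yields $\E_{\boldsymbol{\theta}}\|\nabla_{\boldsymbol{\theta}} f\|^2 \ge 12(L-1)S\,\big(\Tr[\sigma_{\boldsymbol{3|i}}\rho_{\rm in}]\big)^2/8^{LS}$, since $\|\nabla f\|^2$ is a sum of all squared partials and we simply drop the nonnegative contributions from $\boldsymbol{\theta}_\ell'$ and from the first block.

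The main obstacle I anticipate is making the "stop the averaging one angle early" step rigorous: I must verify that fixing a single angle $\theta_{j,\ell}^{(m)}$ and averaging everything else really does produce $\E[a^2] \ge \frac{1}{8^{LS}}(\cdots)^2 \cdot (\text{correct combinatorial factor})$, i.e. that removing one rotation from one block does not break the recursion that drives Theorem~\ref{tqnn_appendix_lemma_cost}. This requires re-examining the inductive step of that proof: when we average over the triple $W_{j,\ell}$, the contribution decomposes over the Pauli basis $\{I,X,Y,Z\}$ on that qubit, and the factor $1/8$ per qubit arises from that Pauli average; I need to check that the component which carries the surviving $Z$ (needed to eventually hit $\sigma_{\boldsymbol{3|i}}$) appears with a constant coefficient in front of the relevant $\cos(2\theta+\phi)$, so that $a$ is bounded below rather than merely nonzero on average. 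A secondary, more bookkeeping-level difficulty is pinning down the exact constant $12 = 4 \times 3$ and the exact range $L-1$ rather than $L$: I expect this to fall out once the telescoping is set up so that the outermost (first) block's rotations are exactly the ones needed to close the trace against $\rho_{\rm in}$, leaving blocks $2,\dots,L$ — that is $L-1$ blocks, $S$ qubits each, $3$ angles each — as the terms contributing genuinely positive squared-gradient mass. I would present the argument as a short lemma isolating the single-angle Fourier identity, then invoke the proof of Theorem~\ref{tqnn_appendix_lemma_cost} almost verbatim for the amplitude bound, and finish with the summation.
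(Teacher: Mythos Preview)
Your approach is essentially the paper's: it too drops to the subset of $3(L-1)S$ single-qubit angles, uses the parameter-shift rule to write each partial derivative, applies the single-angle Fourier identity (the paper packages this as two lemmas, one for $\E_\theta \Tr[OW\rho_1 W^\dag]\Tr[OW\rho_2 W^\dag]$ and one for $\E_\theta(\Tr[OW_+\rho W_+^\dag]-\Tr[OW_-\rho W_-^\dag])^2$), and then runs the block-by-block recursion of Theorem~\ref{tqnn_appendix_lemma_cost} to reach $4\,(\Tr[\sigma_{\boldsymbol{3|i}}\rho_{\rm in}])^2/8^{LS}$ per angle, giving $12(L-1)S$ after summation.

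One bookkeeping point is off, though, and the reason matters. It is block $\ell=L$, not block $1$, that is excluded from the sum. The recursion for a fixed angle $\theta_{\ell,n}^{(j)}$ proceeds by first integrating the outer blocks $L,L-1,\dots,\ell+1$; this step is what converts the original observable $\sigma_{\boldsymbol{i}}$ (whose $n$th component could be $X$ or $Y$) into $\sigma_{\boldsymbol{3|i}}$ with a $Z$ on qubit $n$. Only after that reduction is the anticommutation $\{Z,X\}=\{Z,Y\}=0$ guaranteed at the shifted gate, which is exactly what forces the ``$b$'' term in your $f=a\cos(2\theta+\phi)+b$ to vanish and yields the clean factor $4$. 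If you tried $\ell=L$ directly, an $R_X$ hitting an $X$-component of $\sigma_{\boldsymbol{i}}$ would give $O_2=0$ and the bound collapses. Block $1$, by contrast, is handled just like any other inner block once the observable is already $Z$'s; it contributes a full $1/8^S$ in the recursion and its angles are summed over. So your ``close the trace against $\rho_{\rm in}$'' intuition for the exclusion is not the operative mechanism---fix that and the rest of your outline goes through verbatim.
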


We remark that the trace term in Eqs.~(\ref{tqnn_appendix_lemma_cost_equation}) and (\ref{tqnn_appendix_theorem_eq}) is independent of $N$ for many input states, e.g., $\Tr[\sigma_{\boldsymbol{3|i}} (|0\>\<0|)^{\otimes N}]=1$.

\begin{figure}[t]
 \subfigure[]{
  \centering
  \includegraphics[width=.46\linewidth]{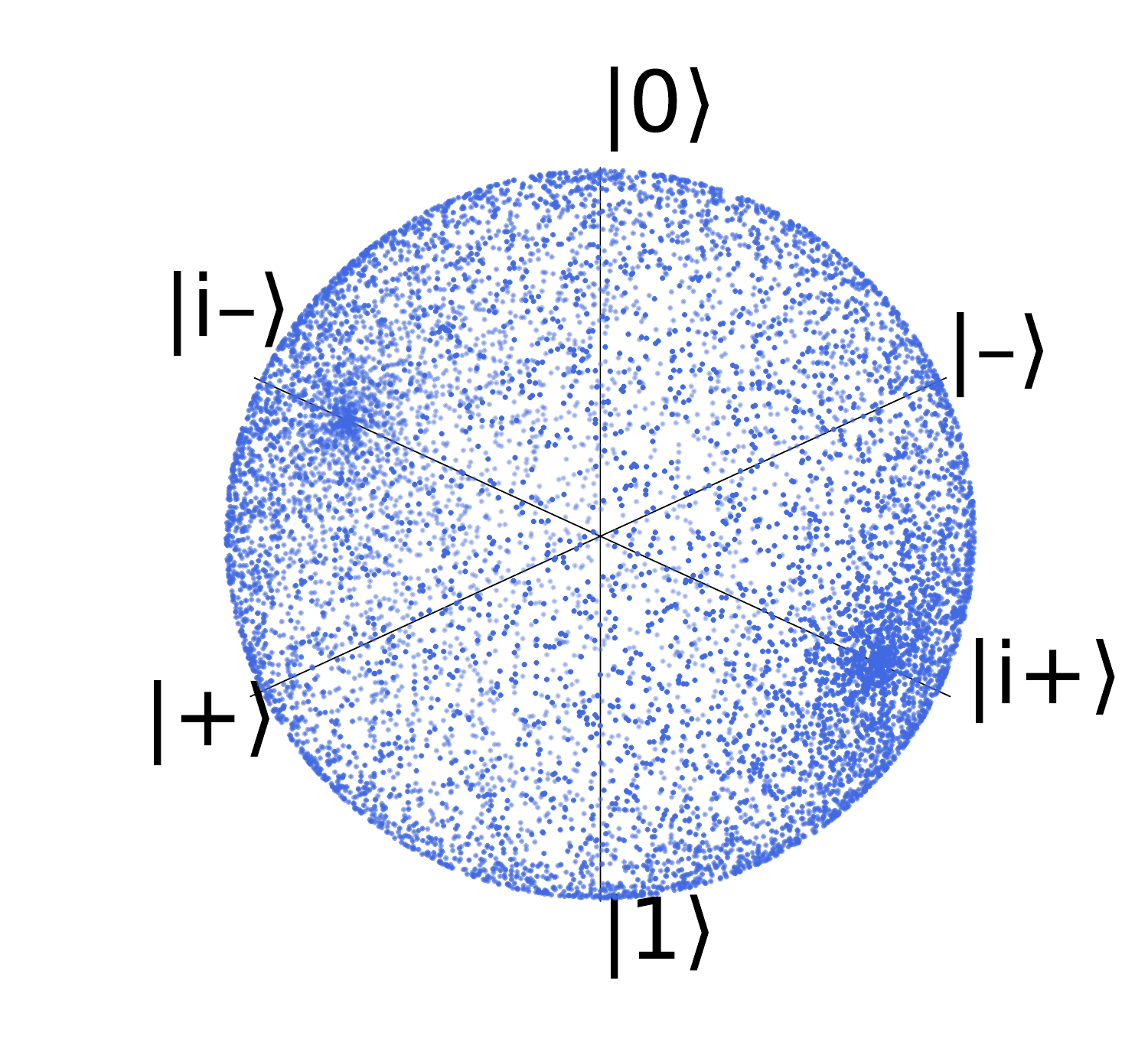}  
  \label{tqnn_fig_dist_uni}
}
 \subfigure[]{
  \centering
  \includegraphics[width=.46\linewidth]{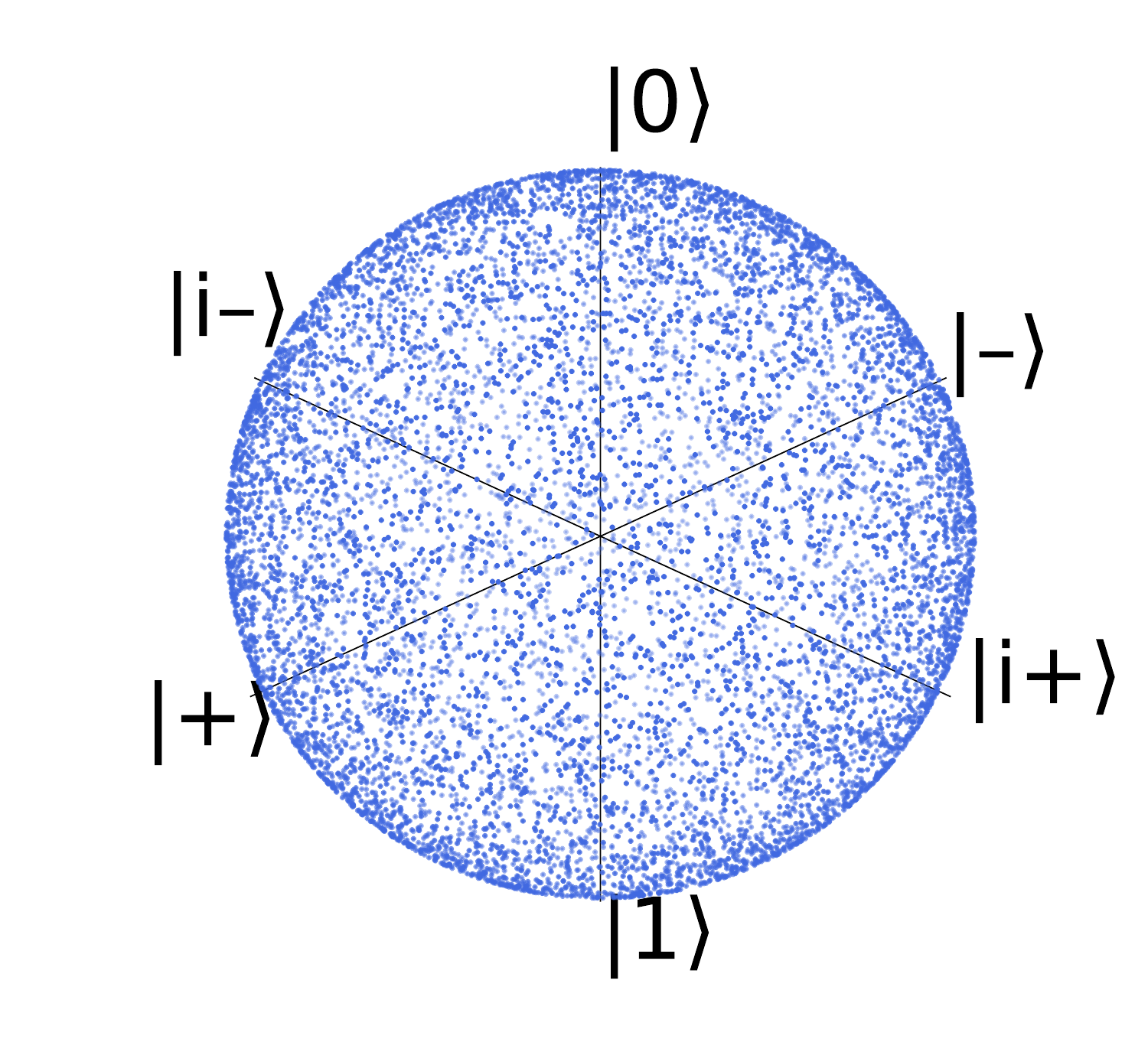}  
  \label{tqnn_fig_dist_haar}
}
\caption{Bloch sphere of states $U|0\>$, where $U=R_Y(\theta_2)R_X(\theta_1)$ for Figure~\ref{tqnn_fig_dist_uni} and $U \in \mathcal{U}(2)$ for Figure~\ref{tqnn_fig_dist_haar}. For the left figure, we sample from the uniform distribution of $(\theta_1, \theta_2) \in [0,2\pi]^{2}$, while for the right figure, we sample from the Haar distribution of the unitary space. Each figure contains 10000 samples.}
\label{tqnn_fig_distributions}
\end{figure}

From the geographic view, the value $\mathop{\E}\limits_{\boldsymbol{\theta}} \|\nabla_{\boldsymbol{\theta}} f\|^2$ characterizes the global steepness of the function surface in the parameter space. Optimizing the objective function $f$ using gradient-based methods could be hard if the norm of the gradient vanishes to zero. Thus, the lower bound in Eq.~(\ref{tqnn_appendix_theorem_eq}) provides a theoretical guarantee for the optimization of the objective function, which is the necessary condition for obtaining a good trainability of QNNs on related tasks. Different from existing works \cite{mcclean2018barren, grant2019initialization, cerezo2020cost, pesah2020absence} that deal with shallow quantum circuits, we provide a positive result on the trainability of deep quantum circuits with certain structures. 

\begin{figure*}[t]
\centering
\includegraphics[width=.75\linewidth]{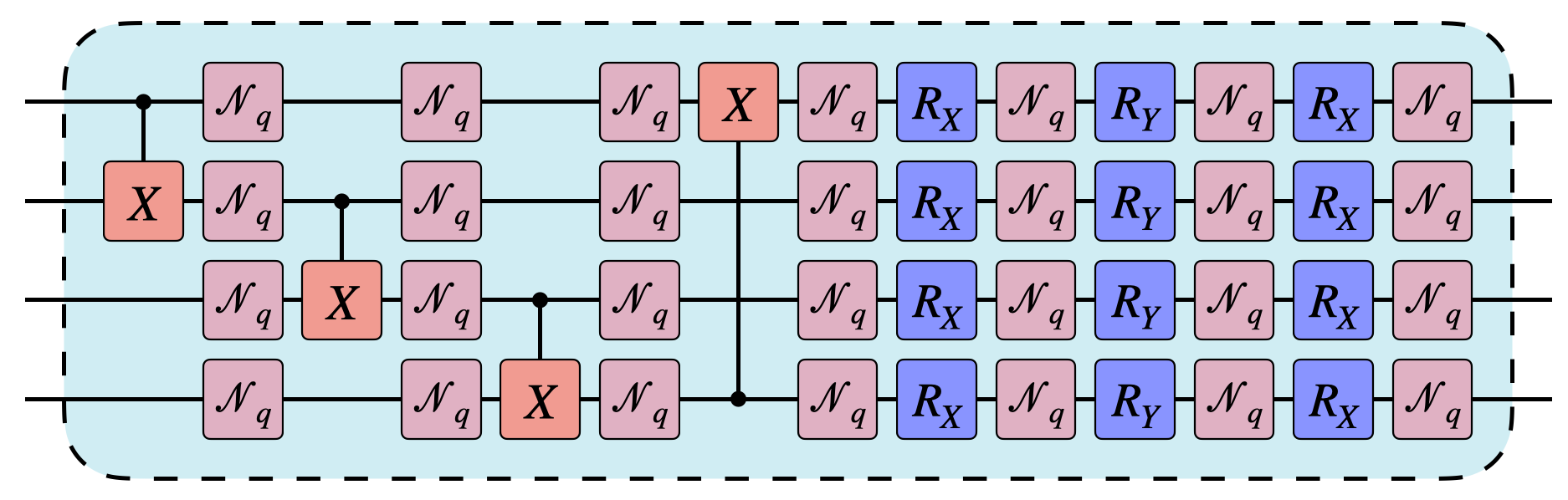}
\caption{{
Illustration of the depolarizing noisy channel used in this work. We denote by $\mathcal{N}_q$ the depolarizing channel on the single qubit, which acts as Eq.~(\ref{tqnn_toy_noisechannel_eq}) on all qubits after each operation layer. Figure~\ref{tqnn_circuit_noise} shows the noisy hardware-efficient ansatz as an example.}}
\label{tqnn_circuit_noise}
\end{figure*}

From the technical view, we provide a new theoretical framework for proving Eqs.~(\ref{tqnn_appendix_lemma_cost_equation}) and (\ref{tqnn_appendix_theorem_eq}). Instead of the unitary $2$-design assumptions which do not accord with practical QNNs training scenarios, we consider the uniform distribution in the parameter space,  in which each parameter in $\boldsymbol{\theta}$ varies continuously in $[0,2\pi]$. Our assumption suits quantum circuits that encode parameters in the phase of single-qubit rotations, and fits the analysis to the loss landscape within the parameter space. Our framework could be extended with further assumptions on current circuit architectures in future works.

Generally, the uniform distribution of parameters could induce a distribution that differs from the Haar distribution~\cite{ozols2009generate}, as illustrated by an example in Figure~\ref{tqnn_fig_distributions}. Specifically, Figure~\ref{tqnn_fig_dist_uni} corresponds to the unitary $U=R_Y(\theta_2)R_X(\theta_1)$, where ${\theta_1, \theta_2}$ are sampled from $[0,2\pi]$ uniformly; while Figure~\ref{tqnn_fig_dist_haar} corresponds to the unitary $U$, which is sampled from the Haar distribution. We can see concentrations of samples around states $|i\pm\> = \frac{|0\> \pm i|1\>}{\sqrt{2}}$ in Figure~\ref{tqnn_fig_dist_uni}, which differs from uniformly distributed samples in Figure~\ref{tqnn_fig_dist_haar}.

\section{Applications} \label{tqnn_experiments}

We provide numerical analysis on the trainability of different QNNs using the PennyLane Python package \cite{bergholm2020pennylane}. We propose {three} optimization tasks, {i.e., the toy model,} quantum simulation of the Ising model, and supervised learning for the binary classification. {In the first and the third task, we are able to show that only CL-QNNs are trainable with deep circuit structures while hardware-efficient QNNs \cite{PMID32874524} and randomly structured QNNs are not. Moreover, In the second task, we demonstrate that the CL-QNNs could perform better than the randomly structured QNNs.}

\subsection{Toy model}
\label{tqnn_exper_toy}

\begin{figure}[t]
\centering
\includegraphics[width=.7\linewidth]{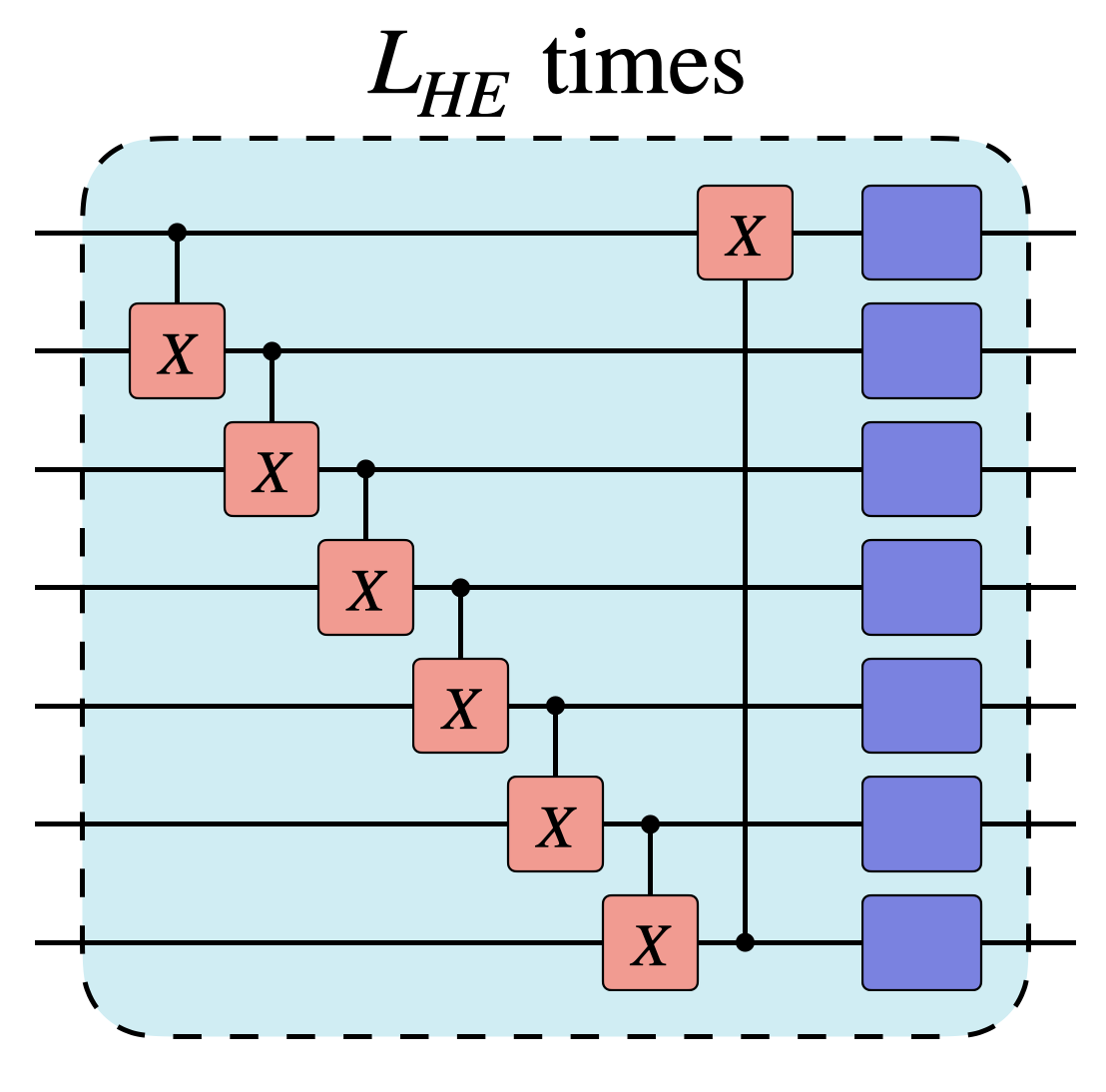}  
\caption{The illustration of the hardware-efficient circuit \cite{PMID32874524}, where the ansatz in the light blue block is repeated for $L_{HE}$ times. The deep blue block denotes the $R_X R_Y R_X$ sequence as defined in Figure~\ref{tqnn_main_circuit}.}
\label{tqnn_he_circuit}
\end{figure}

\begin{figure*}[t]
 \subfigure[]{
  \centering
  \includegraphics[width=.4\linewidth]{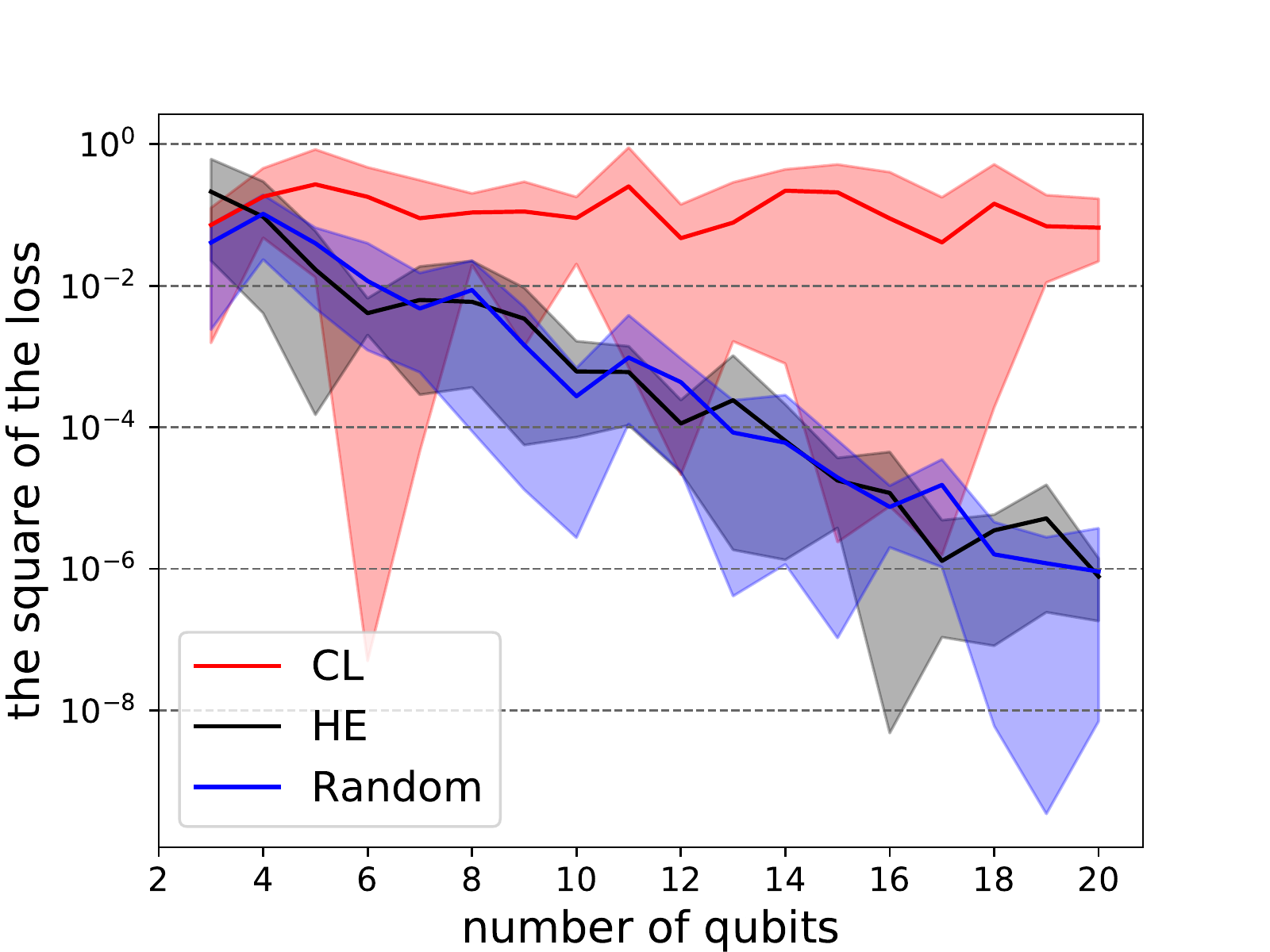}  
  \label{tqnn_fig_toy_loss}
}
 \subfigure[]{
  \centering
  \includegraphics[width=.4\linewidth]{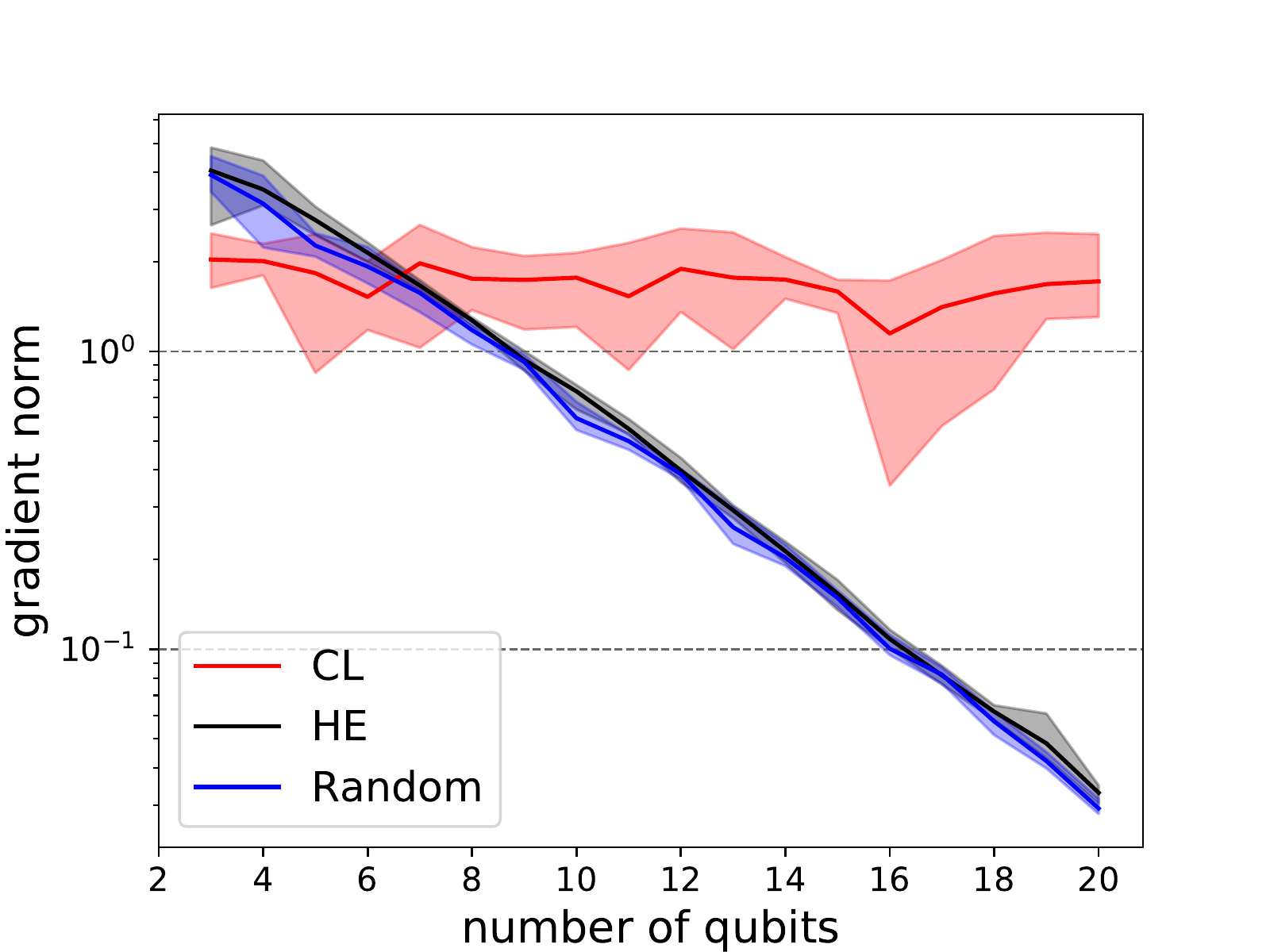}  
  \label{tqnn_fig_toy_grad}
}
  \subfigure[]{
  \centering
  \includegraphics[width=.4\linewidth]{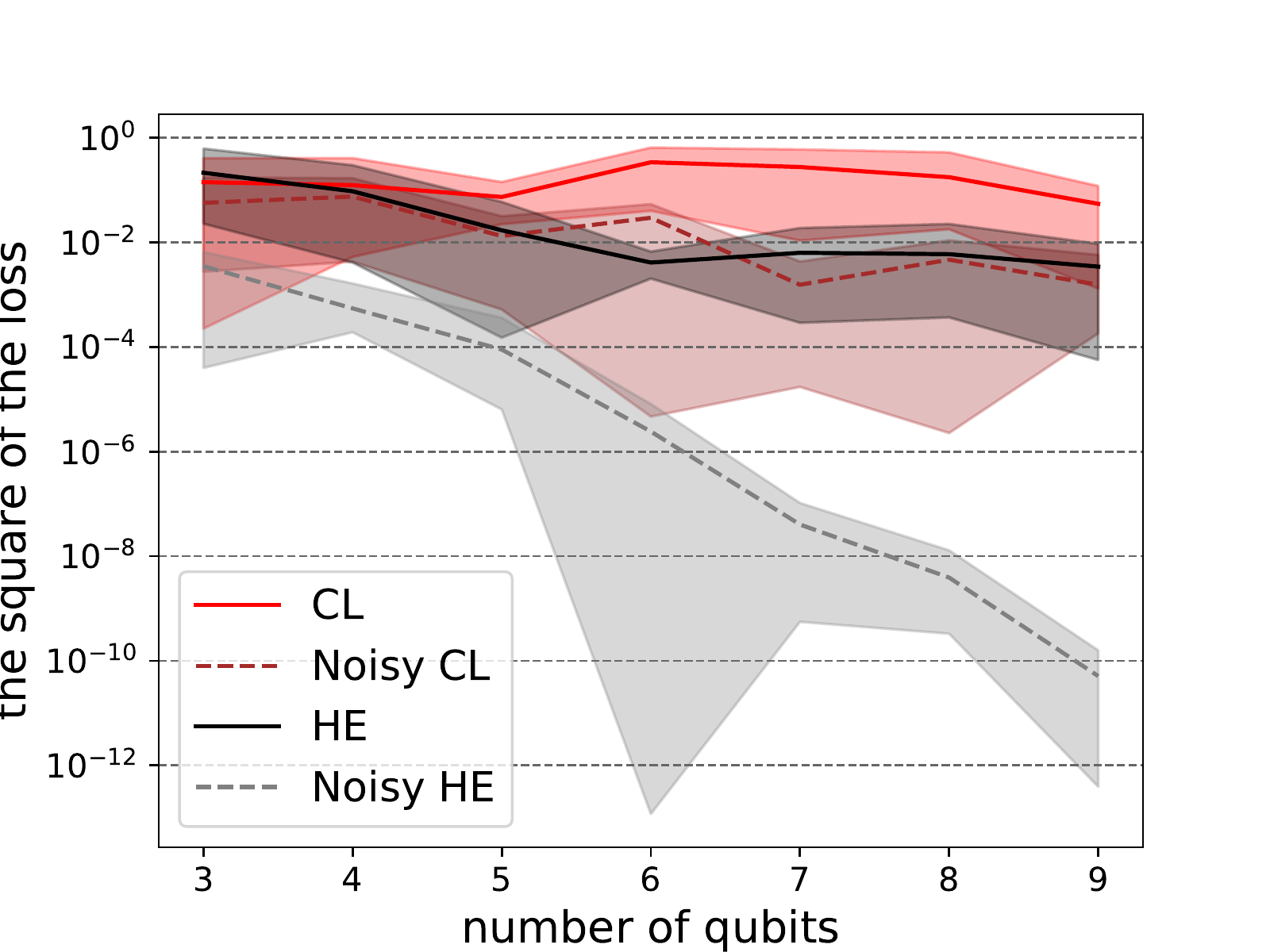}  
  \label{tqnn_fig_toy_noisy_loss}
}
  \subfigure[]{
  \centering
  \includegraphics[width=.4\linewidth]{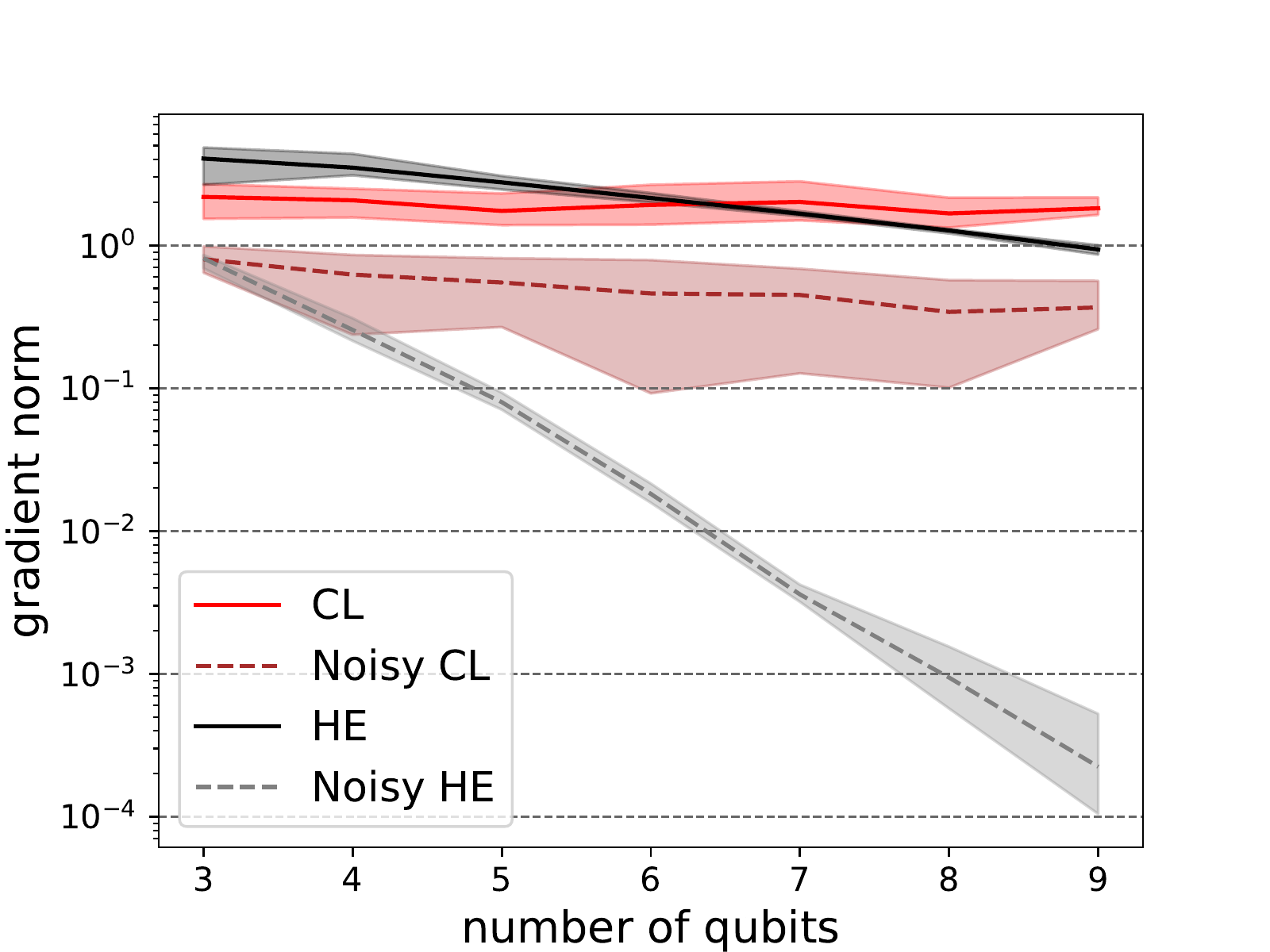}  
  \label{tqnn_fig_toy_noisy_grad}
}
 \subfigure[]{
  \centering
  \includegraphics[width=.4\linewidth]{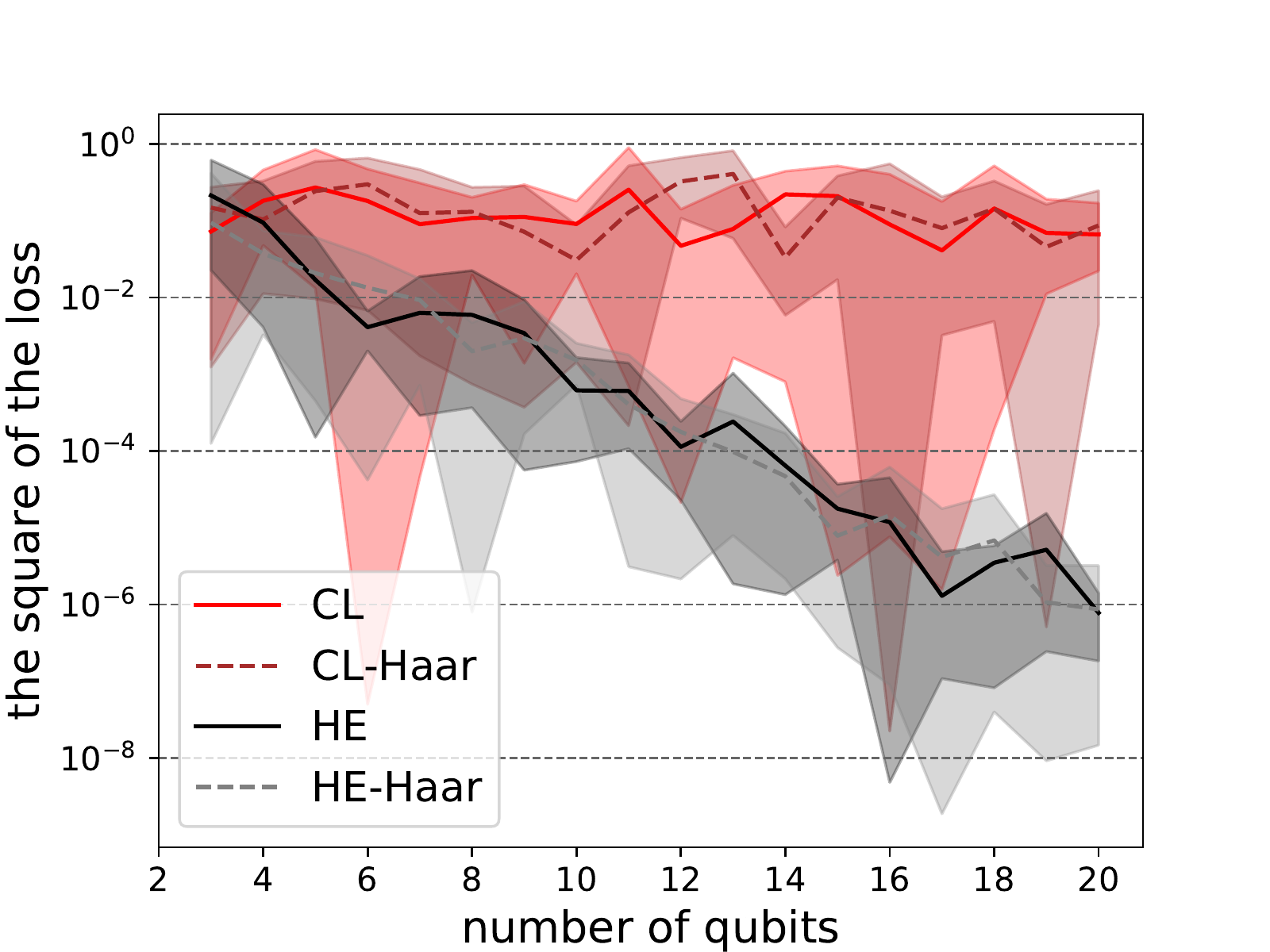}  
  \label{tqnn_fig_toy_haar_loss}
}
 \subfigure[]{
  \centering
  \includegraphics[width=.4\linewidth]{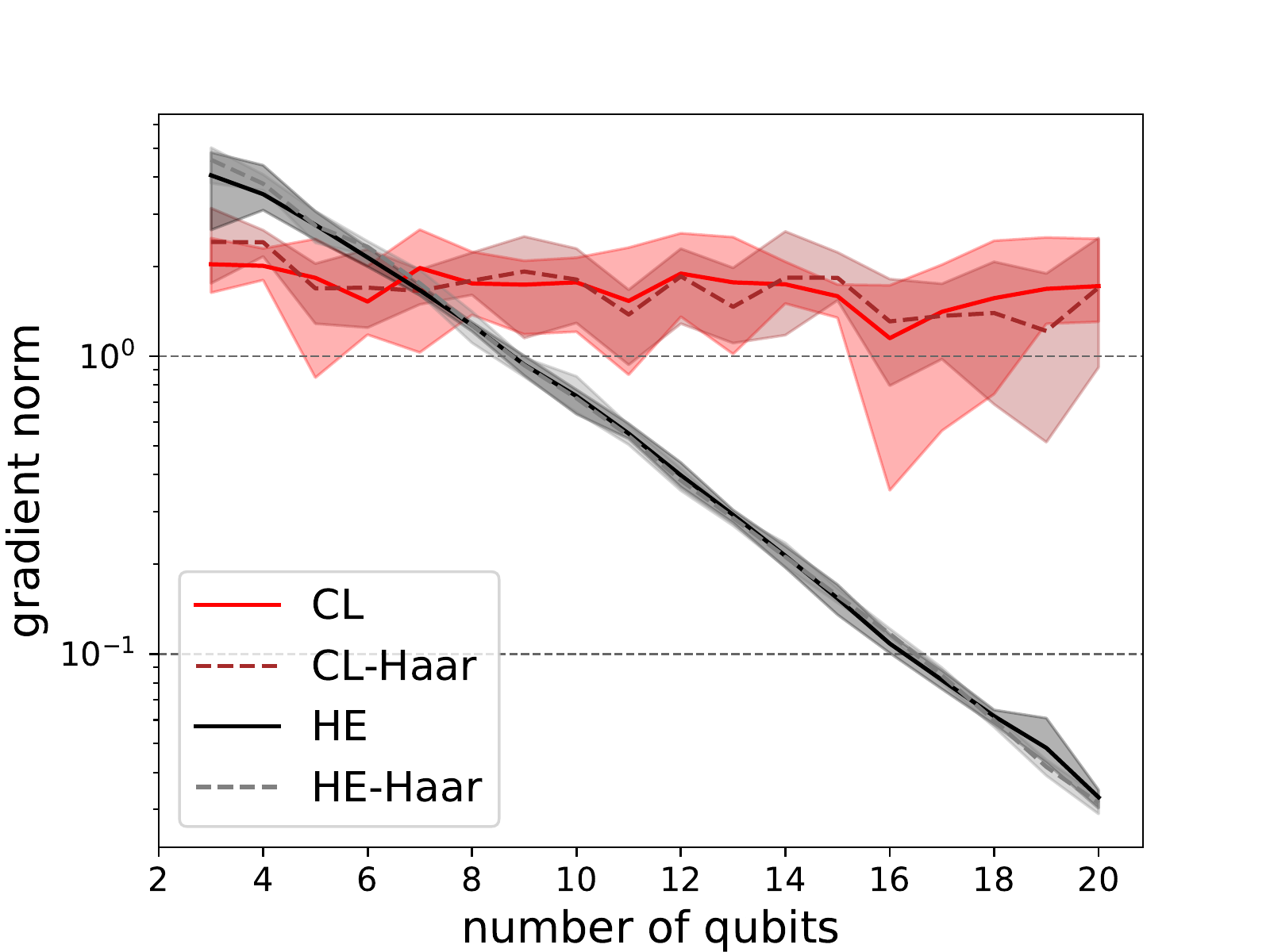}  
  \label{tqnn_fig_toy_haar_grad}
}
\caption{{
Numerical results of the loss value and the gradient information of the toy model~(\ref{tqnn_toy_eq}). Figures~\ref{tqnn_fig_toy_loss} and \ref{tqnn_fig_toy_grad} show the squared loss and the gradient norm with increasing qubits $N \in \{3,4,\cdots,20\}$, respectively. Figures~\ref{tqnn_fig_toy_noisy_loss} and \ref{tqnn_fig_toy_noisy_grad} show the squared loss and the gradient norm corresponding to noiseless and noisy CL-QNNs and HE-QNNs with increasing qubits $N \in \{3,4,\cdots,9\}$, respectively. The noisy channel follow the Figure~\ref{tqnn_circuit_noise}. Figures~\ref{tqnn_fig_toy_haar_loss} and \ref{tqnn_fig_toy_haar_grad} show the squared loss and the gradient norm corresponding to CL-QNNs and HE-QNNs initialized with the uniform parameter distribution and the Haar distribution, respectively.
Each line denotes the average of $5$ rounds of independent simulations.}}
\end{figure*}

{
To begin with, we study the toy model,
\begin{equation}\label{tqnn_toy_eq}
f(\boldsymbol{\theta}) = \Tr[O V(\boldsymbol{\theta}) (|0\>\<0|)^{\otimes N} V(\boldsymbol{\theta})^{\dag}],
\end{equation}
where the observable $O=\sigma_3\otimes \sigma_0 \otimes \cdots \otimes \sigma_0$. We compare the performance of three different ansatzes, the CL-QNN, the hardware-efficient QNN (HE-QNN) in Figure~\ref{tqnn_he_circuit}, and the randomly structured QNN. For CL-QNNs, we set the number of CL blocks $L=2$ and perform CZ gates on all neighboring qubits $(1,2),(2,3),\cdots,(N-1,N),(N,1)$. The number $S=1$ by considering the observable in Eq.~(\ref{tqnn_toy_eq}). Unitaries on the remaining $N-S$ qubits (the light blue block in Figure~\ref{tqnn_main_circuit}) are chosen to be $L_{HE}=5$ layers of the hardware-efficient ansatz in Figure~\ref{tqnn_he_circuit}. Thus, the depth of the parameterized part of CL-QNNs is $D_{CL} = 3 L_{HE}L = 30$. To make a fair comparison, the number of single-qubit gates, CZ gates, and CNOT gates in Random-QNNs are set to be the same as that in the CL-QNNs. The parameterized circuit layer of HE-QNNs is set to be the same as that in CL-QNNs, which is constructed by repeating the hardware-efficient ansatz for $L_{HE}=10$ layers. Since the depth of one HE layer is $\mathcal{O}(N)$, the depth of CL-QNNs and HE-QNNs would grow linearly with increasing qubits.
Parameters in different QNNs are initialized independently from a uniform distribution in $[0,2\pi]$ if no additional requirements are made. }

{
Numerical results of the toy model are shown in Figures~\ref{tqnn_fig_toy_loss}-\ref{tqnn_fig_toy_haar_grad}. First, we record the squared loss and the gradient norm of Eq.~(\ref{tqnn_toy_eq}) for noiseless CL-QNNs, HE-QNNs, and Random-QNNs in Figures~\ref{tqnn_fig_toy_loss} and \ref{tqnn_fig_toy_grad}. For noiseless HE-QNNs and Random-QNNs, both the squared loss and the gradient norm decay exponentially with increasing qubit numbers $N \in \{3,4,\cdots,20\}$, while for noiseless CL-QNNs, these values remain the same magnitude. Thus, we have verified  Theorems~\ref{tqnn_appendix_lemma_cost} and \ref{tqnn_appendix_theorem}. }

\begin{figure*}[t]
 \subfigure[]{
  \centering
  \includegraphics[width=.4\linewidth]{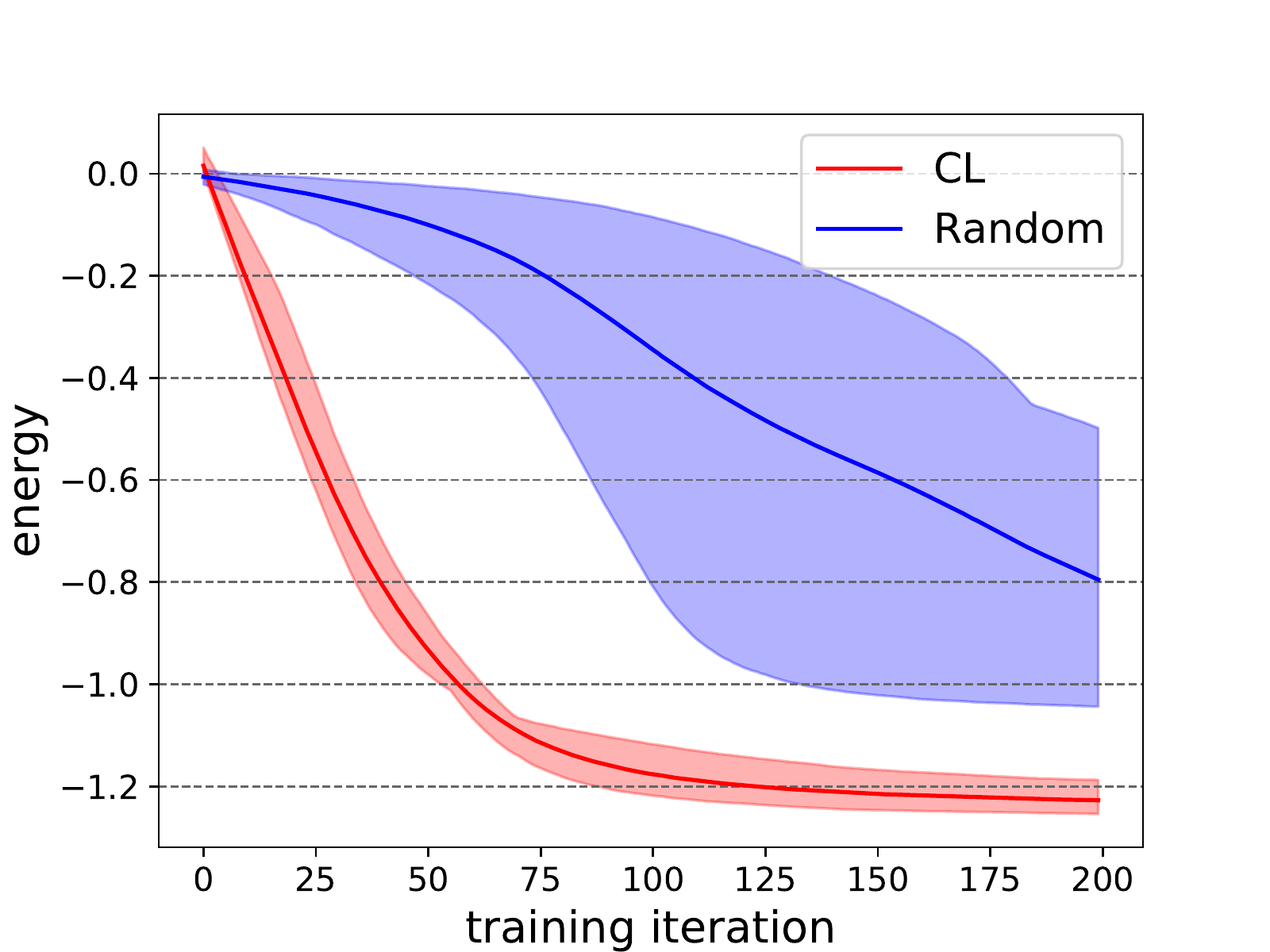}  
  \label{tqnn_fig_ising_sgd_loss}
}
 \subfigure[]{
  \centering
  \includegraphics[width=.4\linewidth]{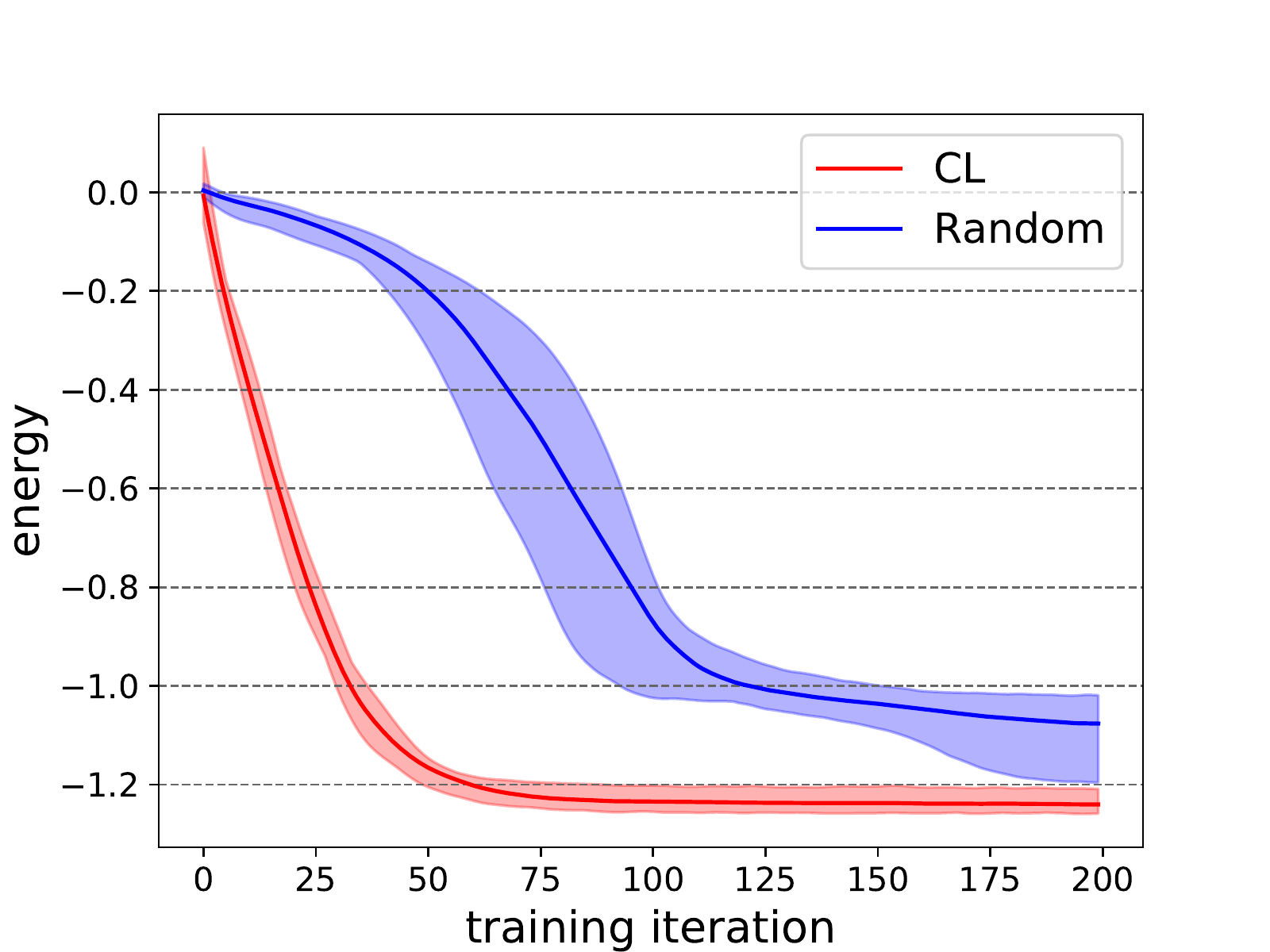}  
  \label{tqnn_fig_ising_sadam_loss}
}
  \subfigure[]{
  \centering
  \includegraphics[width=.4\linewidth]{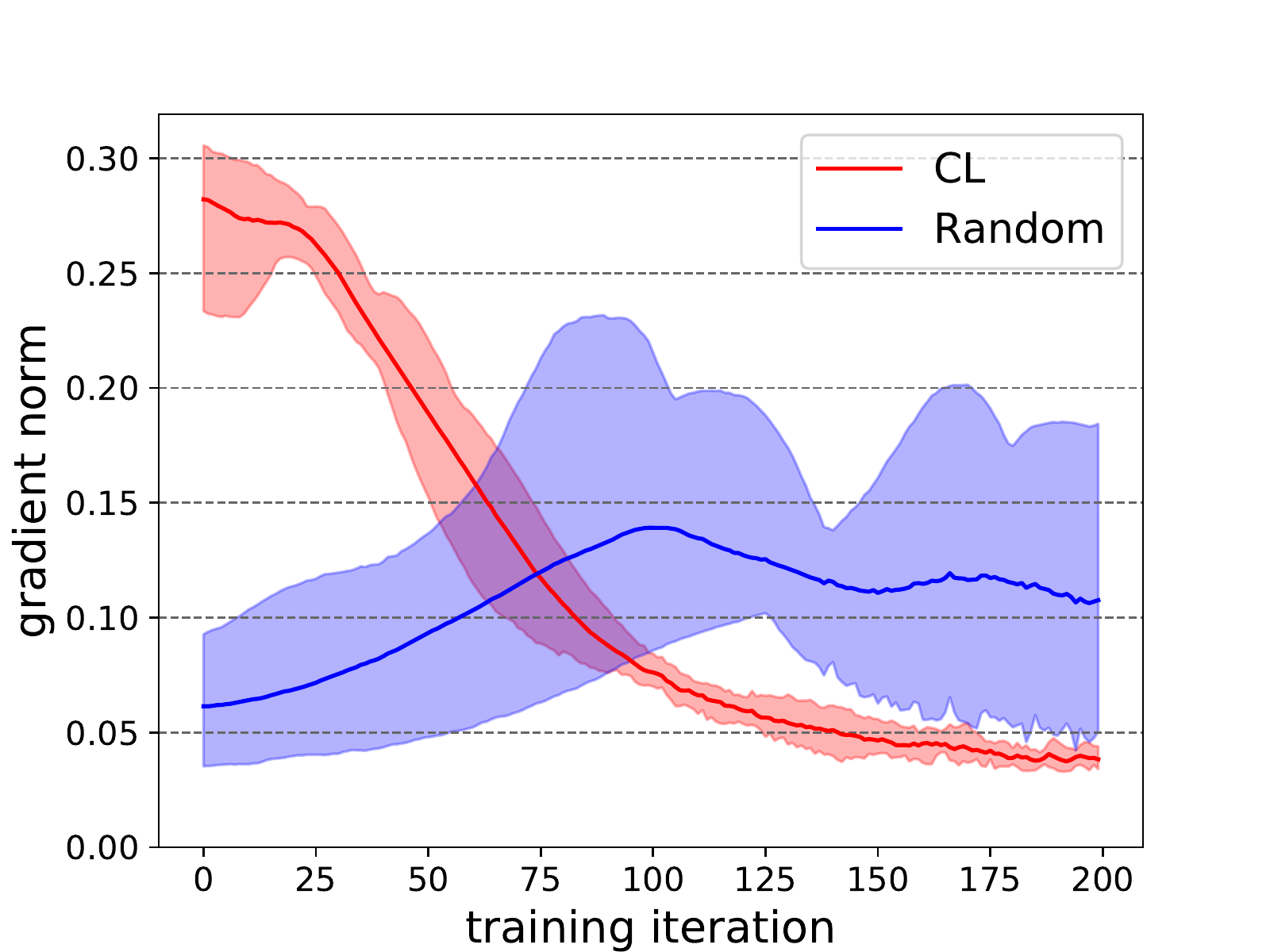}  
  \label{tqnn_fig_ising_sgd_gradnorm}
}
  \subfigure[]{
  \centering
  \includegraphics[width=.4\linewidth]{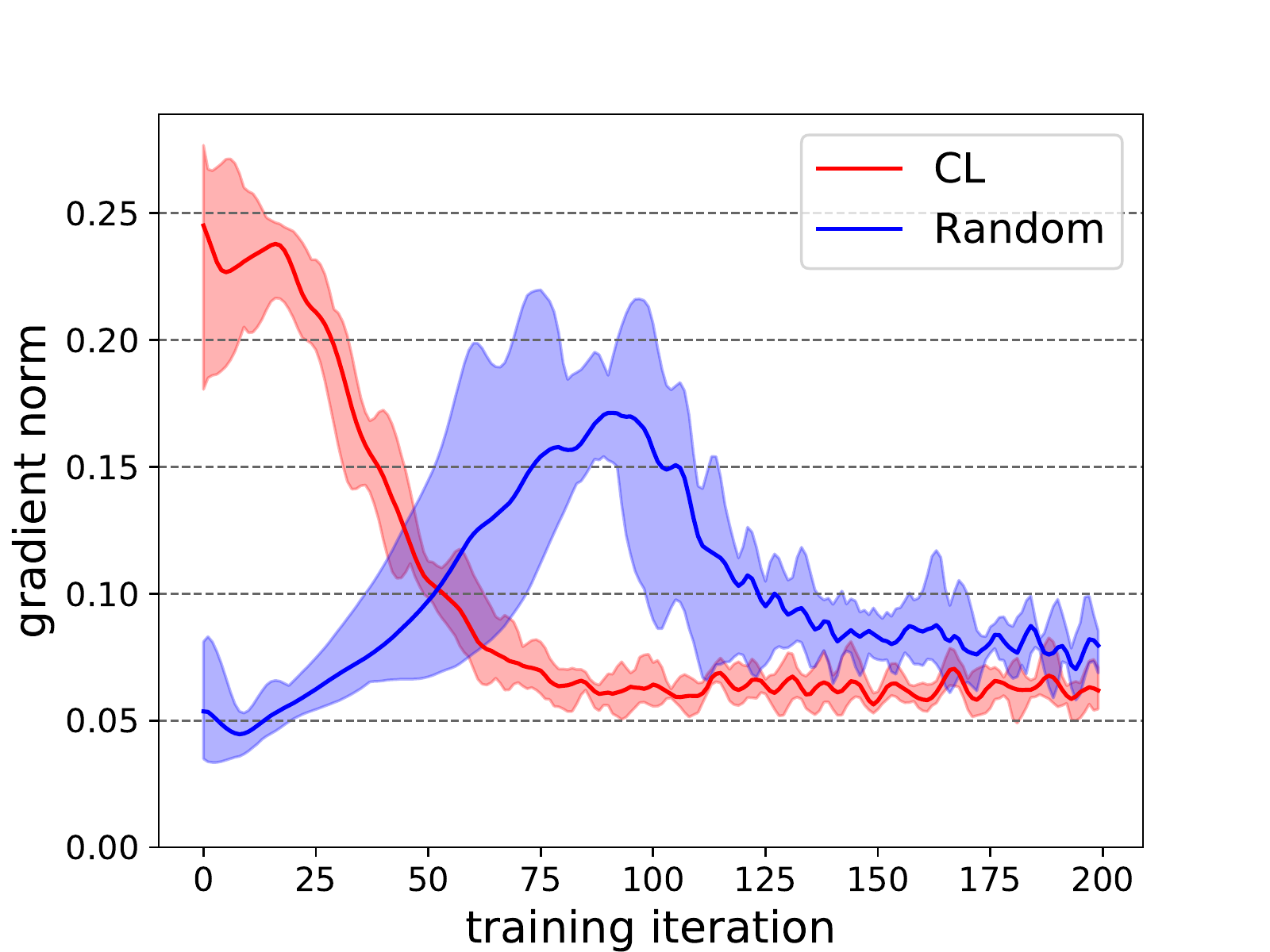}  
  \label{tqnn_fig_ising_sadam_gradnorm}
}
\caption{Numerical results of QNN finding the ground state energy of the Ising model. Figures~\ref{tqnn_fig_ising_sgd_loss} and \ref{tqnn_fig_ising_sadam_loss} show the loss corresponding to the Hamiltonian of the Ising model (\ref{tqnn_ising_eq}) with stochastic gradient descent and stochastic Adam optimizers, respectively. Figures~\ref{tqnn_fig_ising_sgd_gradnorm} and \ref{tqnn_fig_ising_sadam_gradnorm} show the $\ell_2$-norm of the corresponding gradient during the training.
Red and blue lines denote the average of $5$ rounds for the CL-QNN and the Random-QNN, respectively.}
\end{figure*}

{
Next, we consider the performance of CL-QNNs under noisy settings. Specifically, we consider the depolarizing noisy channel illustrated in Figure~\ref{tqnn_circuit_noise}. The noisy channel on the single-qubit state acts as
\begin{equation}\label{tqnn_toy_noisechannel_eq}
\mathcal{N}_q(\rho) = q \rho + \frac{1-q}{2} I,
\end{equation}
where we set $q=0.99$ in simulations.
We compare the squared loss and the gradient norm of CL-QNNs and HE-QNNs under noiseless and noisy settings for increasing qubit numbers $N \in \{3,4,\cdots,9\}$ in Figures~\ref{tqnn_fig_toy_noisy_loss} and \ref{tqnn_fig_toy_noisy_grad}, respectively. The squared loss of both CL-QNNs and HE-QNNs decays when the noise is introduced. Specifically, the squared loss of CL-QNNs drops by around one order of magnitude when the qubit number increases from $3$ to $9$, while the squared loss of HE-QNNs drops by around four orders of magnitude. We observe similar results about the gradient norm in Figure~\ref{tqnn_fig_toy_noisy_grad}. Thus, the trainability of CL-QNNs is less influenced by noise-induced barren plateaus~\cite{wang2021noise}, compared to that of HE-QNNs. } 

{
Finally, we compare the performance of different parameter initializations, i.e., the uniform distribution in $[0,2\pi]$ and the Haar distribution in Figures~\ref{tqnn_fig_toy_haar_loss} and \ref{tqnn_fig_toy_haar_grad}, which show the squared loss and the gradient norm of CL-QNNs and HE-QNNs with qubits $N \in \{3,4,\cdots,20\}$, respectively. In general, parameterized circuits with restricted structures could not generate the Haar distribution in the whole unitary space $\mathcal{U}(2^N)$, so we consider the local Haar distribution on each qubit, which could be generated as $R_X(\theta_3) R_Y(\theta_2) R_X(\theta_1)$ by ignoring the global phase. As shown in Figures~\ref{tqnn_fig_toy_haar_loss} and \ref{tqnn_fig_toy_haar_grad}, for both CL-QNNs and HE-QNNs, the uniform distribution and the Haar distribution show similar performances. }

\begin{figure*}[t]
 \subfigure[]{
  \centering
  \includegraphics[width=.4\linewidth]{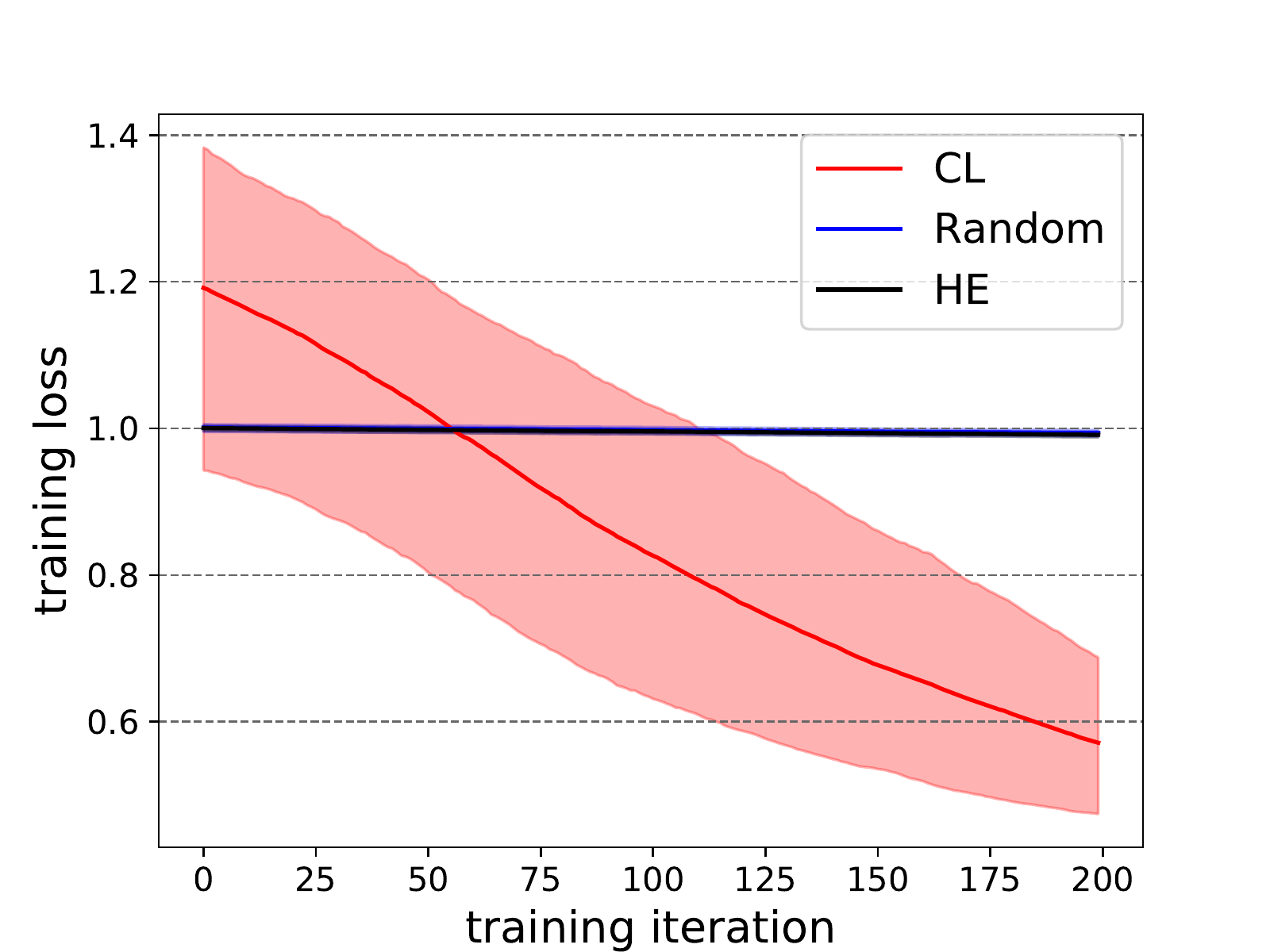}  
  \label{tqnn_fig_qml_sgd_loss}
}
 \subfigure[]{
  \centering
  \includegraphics[width=.4\linewidth]{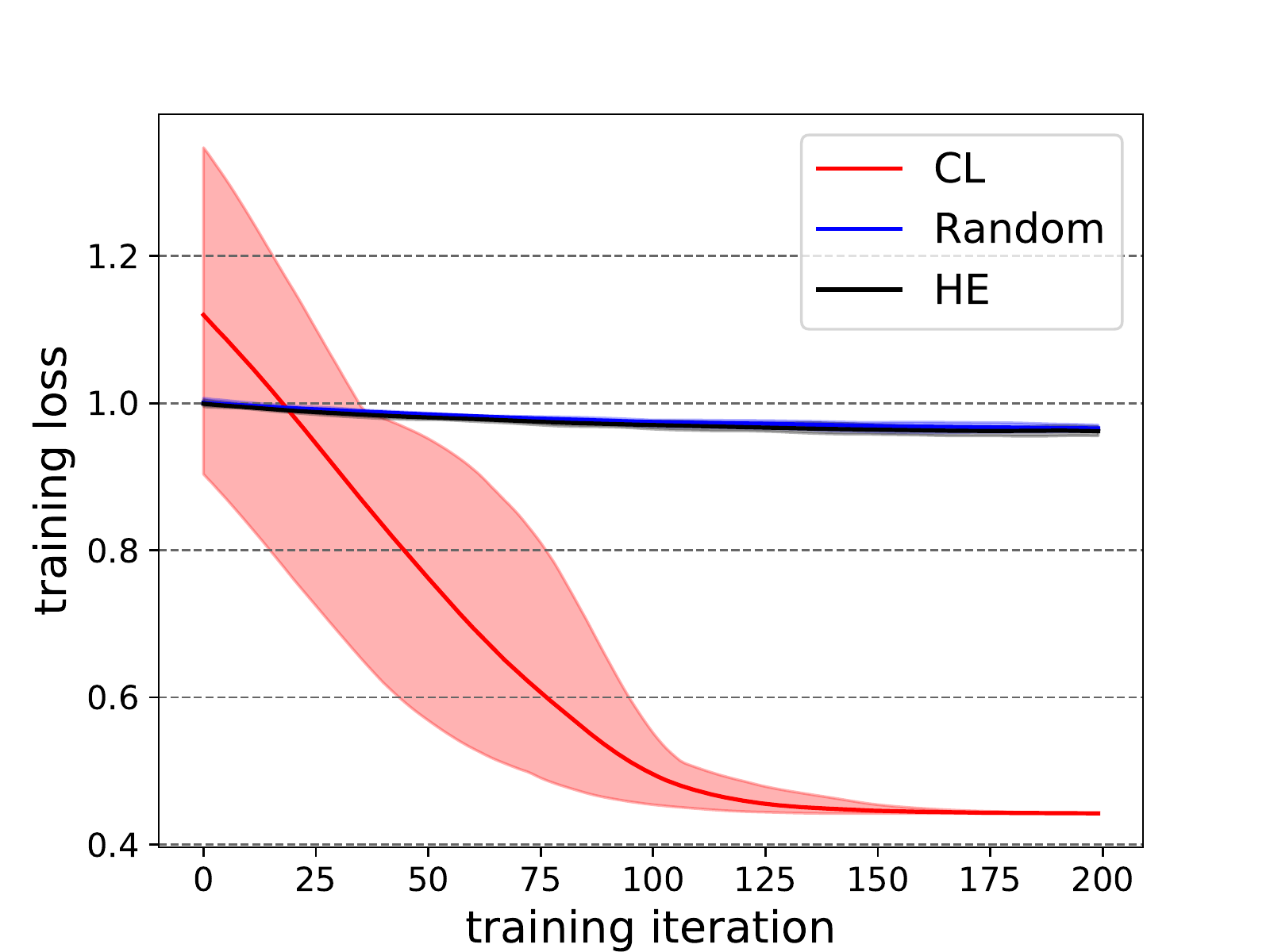}  
  \label{tqnn_fig_qml_sadam_loss}
}
 \subfigure[]{
  \centering
  \includegraphics[width=.4\linewidth]{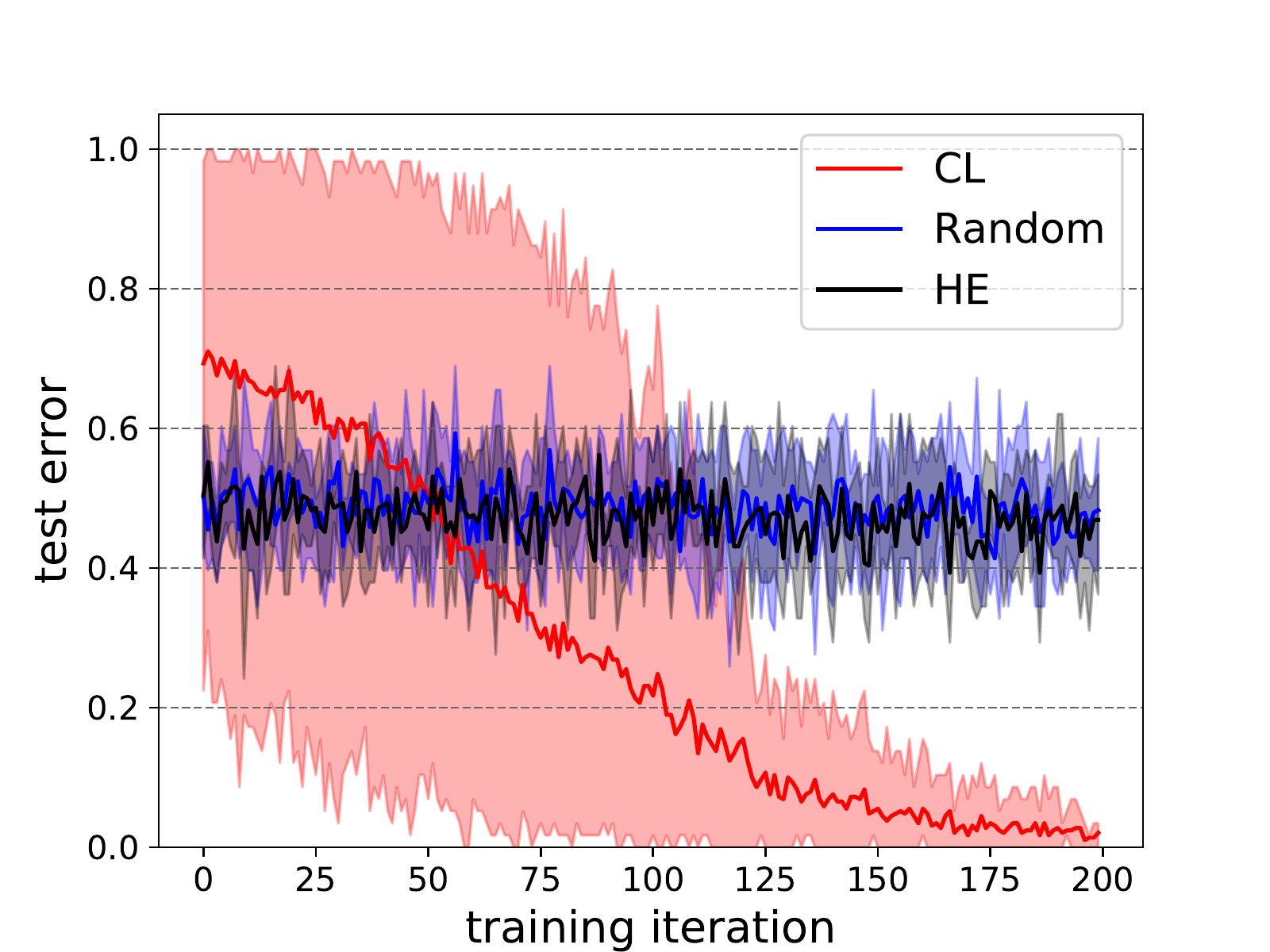}  
  \label{tqnn_fig_qml_sgd_error}
}
 \subfigure[]{
  \centering
  \includegraphics[width=.4\linewidth]{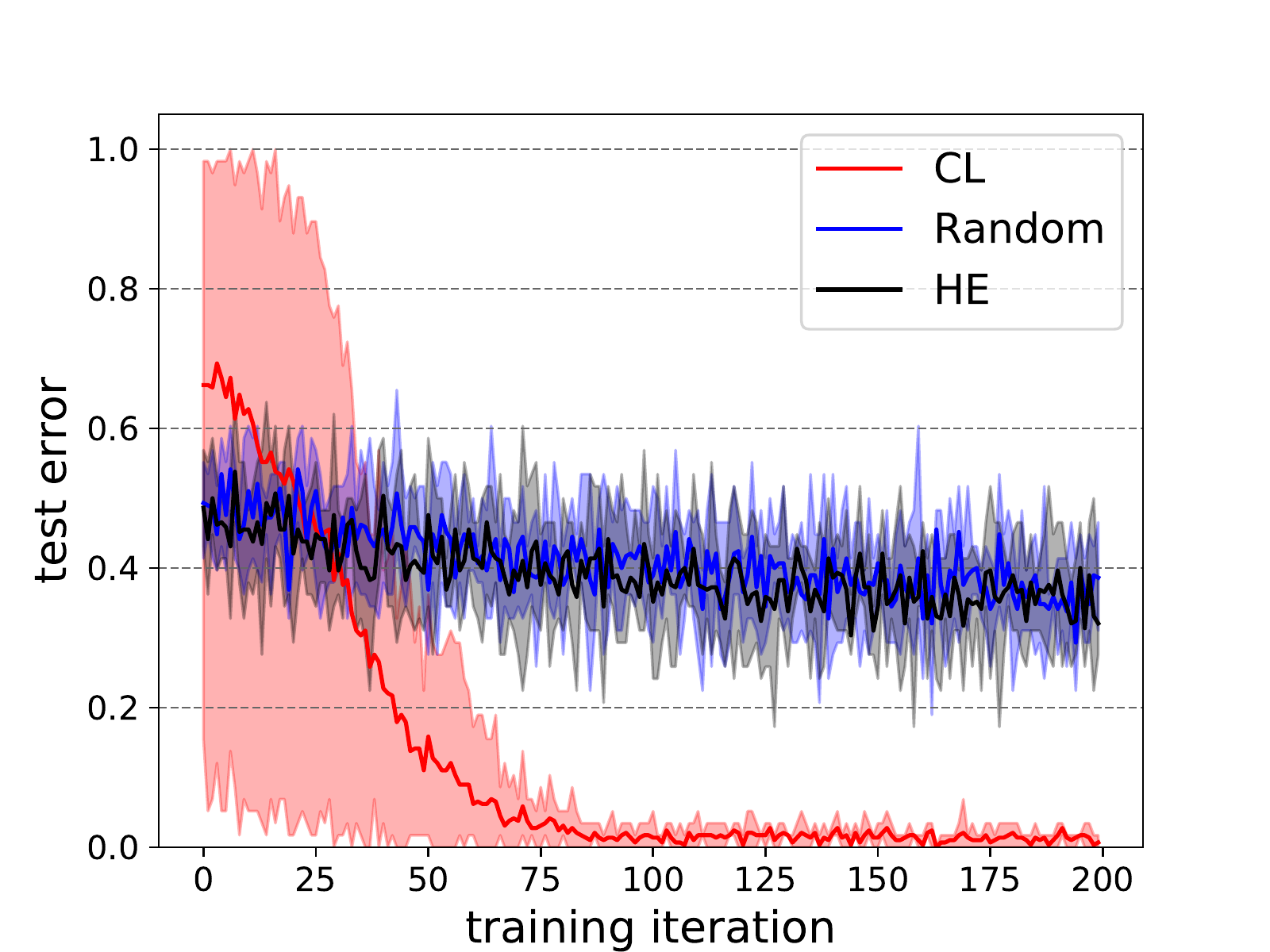}  
  \label{tqnn_fig_qml_sadam_error}
}
  \subfigure[]{
  \centering
  \includegraphics[width=.4\linewidth]{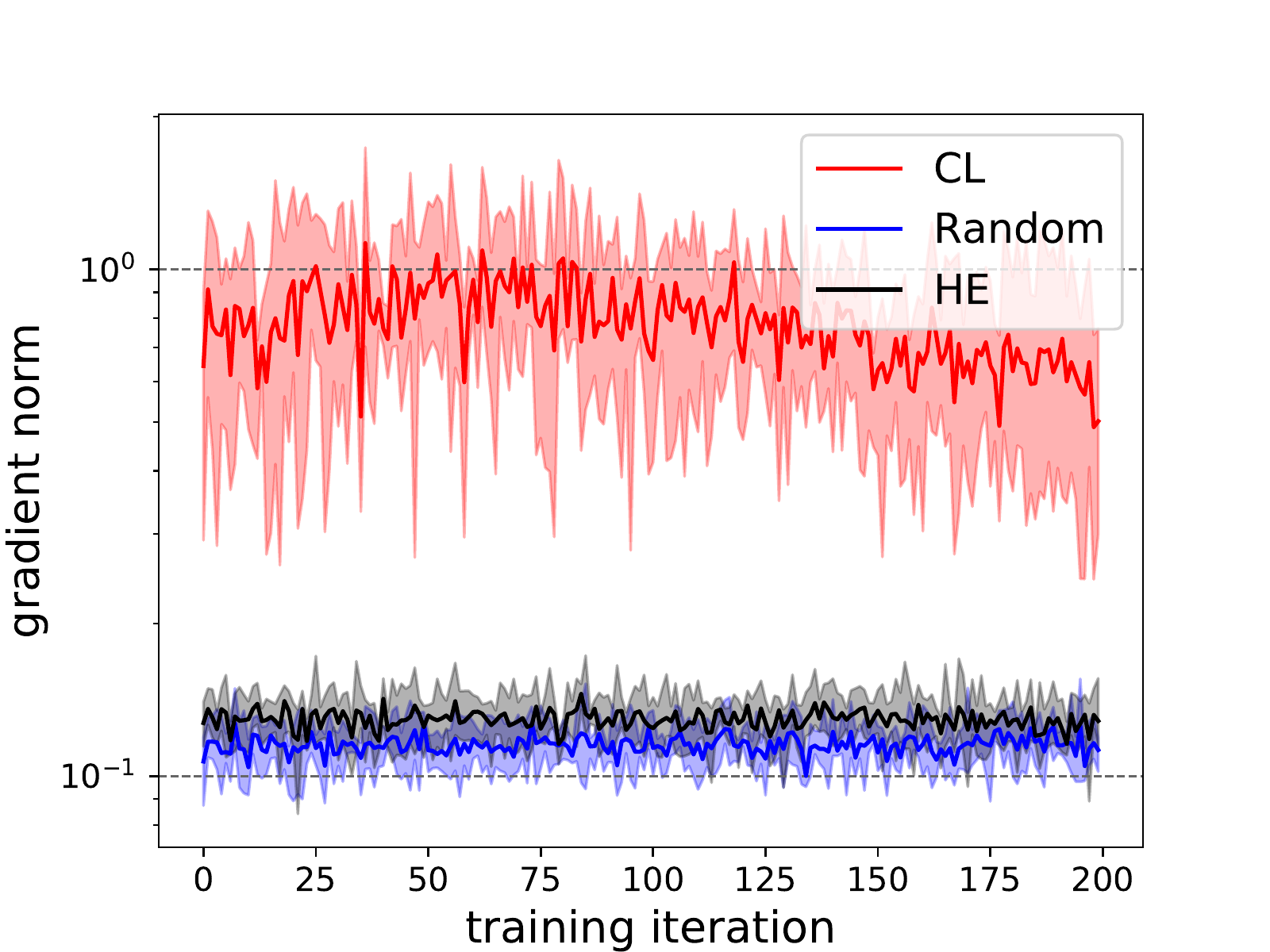}  
  \label{tqnn_fig_qml_sgd_gradnorm}
}
  \subfigure[]{
  \centering
  \includegraphics[width=.4\linewidth]{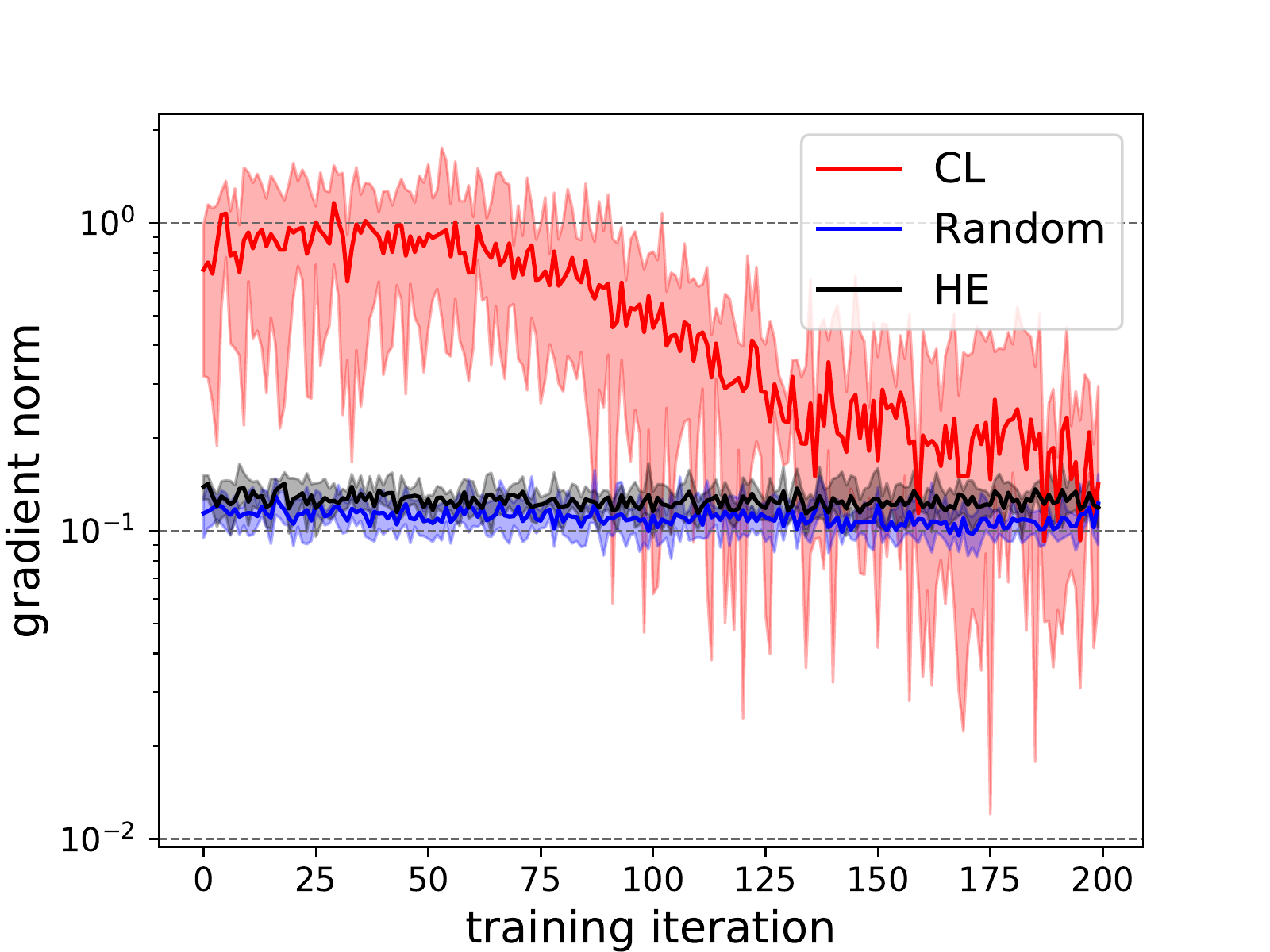}  
  \label{tqnn_fig_qml_sadam_gradnorm}
}
\caption{Numerical results of the binary classification on the wine dataset~\cite{Dua2019}. Figures~\ref{tqnn_fig_qml_sgd_loss} and \ref{tqnn_fig_qml_sadam_loss} show the loss (\ref{tqnn_qml_eq}) of the training set with stochastic gradient descent and stochastic Adam optimizers. Figures~\ref{tqnn_fig_qml_sgd_error} and \ref{tqnn_fig_qml_sadam_error} show the error (\ref{tqnn_qml_error_eq}) of the test set with stochastic gradient descent and stochastic Adam optimizers. Figures~\ref{tqnn_fig_qml_sgd_gradnorm} and \ref{tqnn_fig_qml_sadam_gradnorm} show the $\ell_2$-norm of the corresponding gradient during the training.
Red, blue, and black lines denote the average of $5$ rounds for CL-QNN, the Random-QNN, and the hardware-efficient QNN (HE-QNN), respectively.}
\label{tqnn_fig_qml}
\end{figure*}

\subsection{Ising model}
\label{tqnn_exper_ising}

In the second task, we aim to find the ground state energy of the transverse-field Ising model~\cite{pfeuty1970one} around the critical point with periodic boundary conditions, i.e.
\begin{equation}\label{tqnn_ising_eq}
H = -\frac{1}{N} \sum_{i=1}^{N} 	Z_i Z_{i+1} - \frac{1}{N} \sum_{i=1}^{N} 	X_i,
\end{equation}
where we employ the periodic boundary condition $Z_{N+1}=Z_{1}$. We adopt the loss function defined by Eq.~(\ref{tqnn_intro_loss_function}) with the corresponding observable in Eq.~(\ref{tqnn_ising_eq}). We compare the performance of CL-QNNs with that of randomly structured QNNs. Both QNNs contain $N=16$ qubits and begin from the initial state $(|0\>\<0|)^{\otimes N}$. We set the number of CL blocks $L=6$ and perform CZ gates on all neighboring qubits $(1,2),(2,3),\cdots,(N-1,N),(N,1)$. Unitaries on the remaining $N-S$ qubits in Figure~\ref{tqnn_main_circuit} are employed as the tensor product of $R_X R_Y R_X$ gate sequences. Overall, we generate a CL-QNN with $270$ single-qubit gates and $75$ CZ gates. To make a fair comparison, the number of single-qubit gates and CZ gates in the Random-QNN are set to be the same as that in the CL-QNN. Initial parameters in both QNNs are sampled independently with a uniform distribution in $[0,2\pi]$. We employ two different optimizers, i.e., stochastic gradient descent (SGD)~\cite{Bottou2012} and stochastic Adam {(SGD with adaptive momentum)}~\cite{kingma2014adam}, where the randomness is introduced by finite measurements. In the experiment, we set the number of measurements to be $100$ for each Pauli observable, and the number of training iterations is $200$. The learning rate is $0.15$ for SGD, while for stochastic Adam, the learning rate is $0.01$.

The numerical results of the Ising model task are shown in Figures~\ref{tqnn_fig_ising_sgd_loss}-\ref{tqnn_fig_ising_sadam_gradnorm}. The loss during the training iteration is illustrated in Figures~\ref{tqnn_fig_ising_sgd_loss} and \ref{tqnn_fig_ising_sadam_loss} for different QNNs. For the SGD case, we notice the convergence of the loss corresponding to the CL-QNN after $150$ iterations, while the loss of Random-QNN does not converge and the final value is much higher than that of the CL-QNN. For the stochastic Adam case, the loss of the CL-QNN converges after $75$ iterations, while the loss of Random-QNN does not converge with a much higher final value. We record the $\ell_2$-norm of the gradient during the training in Figures~\ref{tqnn_fig_ising_sgd_gradnorm} and \ref{tqnn_fig_ising_sadam_gradnorm}. The gradient norm for the CL-QNN is larger than $0.2$ at the initial point and converges during training. On the other hand, the gradient norm of the Random-QNN is much smaller than that of the CL-QNN at the initial point and then exceeds the CL-QNN during training. The gradient norm of the Random-QNN after training indicates that the loss of the Random-QNN converges to its stationary point slower than that of the CL-QNN.

\subsection{Binary classification}
\label{tqnn_exper_qml} 

In the third task, we aim to classify two kinds of wines based on their features. The wine dataset \cite{Dua2019} consists of 138 instances derived from three different cultivars. Each of the instances has 13 features that measure different chemical components. In this task, we choose two kinds of wines and divide the original dataset into the training and test set. Both datasets contain 58 samples, and the number of samples from two different classes are the same. Specifically, we denote $A=\{(\boldsymbol{x}^{(i)},y^{(i)})\}$ as the set of input data, where  $y^{(i)} = \pm 1$ for different labels. Qubit embedding~\cite{lloyd2020quantum} is employed to encode the information of datasets into quantum states, i.e.,
\begin{equation*}
\rho_{\text{in}}(\boldsymbol{x}) = \bigotimes\limits_{j=1}^{\text{dim}(\boldsymbol{x})} R_Y(x_j) |0\>\<0| R_Y(x_j)^{\dag},
\end{equation*}
where $\text{dim}(\boldsymbol{x})=13$ is the dimension of $\boldsymbol{x}$. We optimize the $\ell_1$-norm loss defined by
\begin{equation}\label{tqnn_qml_eq}
f_{\text{QML}}(\boldsymbol{\theta}) = \frac{1}{|A|} \sum_{i=1}^{|A|} \left| \Tr[O V(\boldsymbol{\theta}) \rho_{\text{in}}(\boldsymbol{x}^{(i)}) V(\boldsymbol{\theta})^\dag] - y^{(i)} \right|,
\end{equation}
where $O=\sigma_3 \otimes \sigma_0 \cdots \otimes \sigma_0$, and $|A|$ is the size of the dataset. Since the aim of the binary classification is to arrange the right label for the unseen data, we will record the classification error 
\begin{equation}\label{tqnn_qml_error_eq}
e(\boldsymbol{\theta}) = \frac{1}{|A|} \sum_{i=1}^{|A|} \text{sign}\left( \Tr[O V(\boldsymbol{\theta}) \rho_{\text{in}}(\boldsymbol{x}^{(i)}) V(\boldsymbol{\theta})^\dag] - y^{(i)} \right)
\end{equation}
on the test dataset.

We compare the performance of CL-QNNs with that of hardware-efficient QNNs (HE-QNNs) in Figure~\ref{tqnn_he_circuit} and randomly structured QNNs. The qubit number $N=13$ suffices to encode all the features of the wine dataset. For CL-QNNs, we set the number of CL blocks {$L=2$} and perform CZ gates on all neighboring qubits $(1,2),(2,3),\cdots,(N-1,N),(N,1)$. The number $S=1$ by considering the observable in Eq.~(\ref{tqnn_qml_eq}). Unitaries on the remaining $N-S$ qubits (the light blue block in Figure~\ref{tqnn_main_circuit}) are chosen to be the hardware-efficient ansatz in Figure~\ref{tqnn_he_circuit} with the depth $L_{HE}=5$. Thus, the depth of the parameterized part of the CL-QNN is {$D_{CL} = 3 L_{HE} L = 30$}. To make a fair comparison, the number of single-qubit gates, CZ gates, and CNOT gates in Random-QNNs are set to be the same as that in the CL-QNNs, and the parameterized circuit depth of the HE-QNNs is set to be the same as that in the CL-QNN by using {$L_{HE}=10$} HE layers. Parameters in different QNNs are initialized independently from a uniform distribution in $[0,2\pi]$. We employ SGD and stochastic Adam optimizers, where the randomness is introduced by finite measurements and the batch training strategy. We set the number of measurements to be $100$ and the batch size to be $8$. The number of training iterations is $200$. The learning rates of SGD and stochastic Adam optimizers are $0.01$.

The numerical results of the binary classification task using different QNNs are shown in Figures~\ref{tqnn_fig_qml_sgd_loss}-\ref{tqnn_fig_qml_sadam_gradnorm}. {For the SGD case, both the training loss and the test error corresponding to the CL-QNN decrease significantly compared with that of the Random-QNN and the HE-QNN. For the stochastic Adam case, the training loss and the test error of the CL-QNN converge after $125$ iterations, while the loss of Random-QNN and HE-QNN do not show clear convergence with a much higher final value. We record the $\ell_2$-norm of the gradient during the training in Figures~\ref{tqnn_fig_qml_sgd_gradnorm} and \ref{tqnn_fig_qml_sadam_gradnorm}. For the SGD case, the gradient norm of the CL-QNN is larger than $0.5$ during the training. For the stochastic Adam case, the gradient norm for the CL-QNN is larger than $0.5$ at the initial point and converges during training. On the other hand, the gradient norm of the Random-QNN and the HE-QNN is much smaller than that of the CL-QNN at the initial point before the latter converges.}

In conclusion, CL-QNNs of moderate size showed much better trainability for finding the ground state of the Ising model and the binary classification when compared with Random-QNNs and HE-QNNs. These numerical experiments accord with the theorems obtained in the previous section.

\section{Conclusion} \label{tqnn_conclusions}

In this work, we analyze the vanishing gradient problem in quantum neural networks. We prove that the gradient norm of $N$-qubit quantum neural networks with a controlled-layer structure is lower bounded by $\Omega(8^{-LS})$, where $L$ is the number of CL blocks in the circuit and $S$ is the number of qubits corresponding to the quantum observable. We remark that the circuit depth in each CL block could be arbitrary. Thus, the bound guarantees the trainability of deep CL-QNNs. 

Our theoretical framework does not require the unitary 2-design assumption as in previous works; hence, it will be more applicable to real-world quantum neural networks for near-term quantum computers. 
When compared with randomly structured QNNs and hardware-efficient QNNs, which are known to suffer from the barren plateau problem, CL-QNNs show better trainability for finding the ground state of the Ising model and the binary classification task. We hope that the paper could inspire future works on the trainability of deep QNNs with different architectures and other variational quantum algorithms.

\bibliography{reference}
\bibliographystyle{unsrt}

\newpage


\appendix

\section{Technical Lemmas}
\label{tqnn_app_technical_lemmas}
In this section we provide some technical lemmas.

Consider the objective function of the QNN defined as 
\begin{equation*}
f(\boldsymbol{\theta}) = \<O\> = {\rm Tr} [O \cdot V(\boldsymbol{\theta}) \rho_{\text{in}} V(\boldsymbol{\theta})^{\dag}],	
\end{equation*}
where $\boldsymbol{\theta}$ encodes all parameters which participate the circuit as $e^{-i\theta_j \sigma_k}, k \in {1,2,3}$, $\rho_{\text{in}}$ denotes the input state and $O$ is an arbitrary quantum observable. 
\begin{lemma}\label{tqnn_params_shifting}
The partial derivative of the function $f(\boldsymbol{\theta})$ with respect to the parameter $\theta_j$ could be calculated by
\begin{multline}\label{tqnn_appendix_lemma_shifting_rule}
\frac{\partial f}{\partial \theta_j}  = {\rm Tr} [O \cdot V(\boldsymbol{\theta}_+) \rho_{\rm {in}} V(\boldsymbol{\theta}_+)^{\dag}] \\ - {\rm Tr} [O \cdot V(\boldsymbol{\theta}_-) \rho_{\rm {in}} V(\boldsymbol{\theta}_-)^{\dag}] ,
\end{multline}
where $\boldsymbol{\theta_+} \equiv \boldsymbol{\theta} + \frac{\pi}{4}\boldsymbol{e}_j$ and $\boldsymbol{\theta_-} \equiv \boldsymbol{\theta} - \frac{\pi}{4}\boldsymbol{e}_j$.
\end{lemma}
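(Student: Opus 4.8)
The plan is to reduce the statement to the standard parameter-shift identity by isolating the dependence of $V(\boldsymbol{\theta})$ on the single parameter $\theta_j$. Since $\theta_j$ enters the circuit only through one gate $e^{-i\theta_j\sigma_k}$ with $k\in\{1,2,3\}$, I would first write $V(\boldsymbol{\theta}) = A\, e^{-i\theta_j\sigma_k}\, B$, where $A$ and $B$ collect all gates after and before that rotation and are independent of $\theta_j$. Absorbing these into an effective observable $\tilde O = A^{\dag} O A$ and an effective state $\tilde\rho = B\,\rho_{\rm in}\,B^{\dag}$, the objective becomes $g(\theta_j) := f(\boldsymbol{\theta}) = \Tr[\tilde O\, U(\theta_j)\,\tilde\rho\, U(\theta_j)^{\dag}]$ with $U(\theta) = e^{-i\theta\sigma_k}$.

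The next step uses the Pauli relation $\sigma_k^2 = I$, which gives $U(\theta) = \cos\theta\, I - i\sin\theta\,\sigma_k$. Expanding $U(\theta)\,\tilde\rho\, U(\theta)^{\dag}$ and collecting terms with the double-angle formulas produces $g(\theta) = a + b\cos 2\theta + c\sin 2\theta$ for constants $a,b,c$ depending on $\tilde O,\tilde\rho,\sigma_k$ but not on $\theta$, with $b = \tfrac12\Tr[\tilde O(\tilde\rho-\sigma_k\tilde\rho\sigma_k)]$ and $c = -\tfrac{i}{2}\Tr[\tilde O(\sigma_k\tilde\rho-\tilde\rho\sigma_k)]$. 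Differentiating gives $\partial_{\theta_j} g = -2b\sin 2\theta + 2c\cos 2\theta$. Evaluating instead the shifted difference $g(\theta+\tfrac{\pi}{4}) - g(\theta-\tfrac{\pi}{4})$ and using $\cos(2\theta\pm\tfrac{\pi}{2}) = \mp\sin 2\theta$ and $\sin(2\theta\pm\tfrac{\pi}{2}) = \pm\cos 2\theta$ yields exactly $-2b\sin 2\theta + 2c\cos 2\theta$. Hence $\partial_{\theta_j} g(\theta_j) = g(\theta_j+\tfrac{\pi}{4}) - g(\theta_j-\tfrac{\pi}{4})$. Translating back, shifting the scalar argument of $g$ by $\pm\tfrac{\pi}{4}$ is precisely the replacement $\boldsymbol{\theta}\to\boldsymbol{\theta}_{\pm}=\boldsymbol{\theta}\pm\tfrac{\pi}{4}\boldsymbol{e}_j$, so $g(\theta_j\pm\tfrac{\pi}{4}) = \Tr[O\, V(\boldsymbol{\theta}_{\pm})\,\rho_{\rm in}\,V(\boldsymbol{\theta}_{\pm})^{\dag}]$, which is Eq.~(\ref{tqnn_appendix_lemma_shifting_rule}).

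There is no substantive obstacle here: the argument is elementary linear algebra together with the relation $\sigma_k^2 = I$. The only points that need a little care are (i) confirming that $A$ and $B$ genuinely do not depend on $\theta_j$, which holds because the setup of the lemma has each parameter appear in exactly one gate (if a parameter were shared across several gates the rule would instead become a sum of such differences), and (ii) the trigonometric bookkeeping in matching the derivative to the shifted difference; everything else is routine.
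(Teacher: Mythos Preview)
Your proof is correct and follows essentially the same reduction as the paper: both isolate the $\theta_j$-dependence by absorbing the surrounding gates into an effective observable and effective input state, then apply the single-parameter shift rule. The only difference is that the paper simply cites the single-parameter rule from the literature, whereas you derive it explicitly via the expansion $e^{-i\theta\sigma_k}=\cos\theta\,I-i\sin\theta\,\sigma_k$ and the resulting $a+b\cos 2\theta+c\sin 2\theta$ form; this makes your argument more self-contained but is otherwise the same route.
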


\begin{proof}

First we assume that the circuit $V(\boldsymbol{\theta})$ consists of $p$ parameters, and could be written in the sequence:
\begin{equation*}
V(\boldsymbol{\theta}) = V_{p} (\theta_p) \cdot V_{p-1} (\theta_{p-1}) \cdots V_{1} (\theta_1),	
\end{equation*}
such that each block $V_j$ contains only one parameter.

The parameter shifting rule \cite{PhysRevA.98.032309, PhysRevLett.118.150503} provides a gradient calculation method for the single parameter case.  Consider the observable $O' = V_{j+1}^{\dag} \cdots V_p^\dag O V_p \cdots V_{j+1}$ and the input state  $\rho_{\text{in}}' = V_{j-1} \cdots V_1 \rho_{\text{in}} V_1^\dag \cdots V_{j-1}^{\dag}$. The gradient of $f_j (\theta_j)= \Tr[O' \cdot U(\theta_j) \rho_{\text{in}}' U(\theta_j)^{\dag}]$ could be calculated as
\begin{equation*}
\frac{d f_j}{d \theta_j} = f_j(\theta_j+\frac{\pi}{4}) - f_j(\theta_j - \frac{\pi}{4}).	
\end{equation*}
Thus, by inserting the form of $O'$ and $\rho_{\text{in}}'$, we could obtain
\begin{align*}
&\frac{\partial f}{\partial \theta_j} = \frac{d f_j}{d \theta_j} = f_j(\theta_j+\frac{\pi}{4}) - f_j(\theta_j - \frac{\pi}{4}) \\
&= \Tr [O \cdot V(\boldsymbol{\theta}_+) \rho_{\text{in}} V(\boldsymbol{\theta}_+)^{\dag}] - \Tr [O \cdot V(\boldsymbol{\theta}_-) \rho_{\text{in}} V(\boldsymbol{\theta}_-)^{\dag}].
\end{align*}
\end{proof}

\begin{lemma}\label{tqnn_lemma_wawbwcwd}
Let $\theta$ be a variable with uniform distribution in $[0,2\pi]$. Let $G$ be {an arbitrary hermitian unitary}, and define $W=e^{-i\theta G}$.
Let $O$ be an arbitrary quantum observable. 
Then
\begin{align}
&\mathop{\E}_{\theta} {\rm Tr}[O W \rho_1 W^\dag] {\rm Tr} [O W \rho_2 W^\dag] = {\rm Tr} [O_1 \rho_1] {\rm Tr} [O_1 \rho_2] \nonumber \\
&+ \frac{1}{2} {\rm Tr} [O_2 \rho_1] {\rm Tr} [O_2 \rho_2] + \frac{1}{2} {\rm Tr} [iO_2 G \rho_1] {\rm Tr} [iO_2 G \rho_2], \label{tqnn_lemma_wawbwcwd_eq}
\end{align}
where $\rho_1,\rho_2$ are two quantum states, and $O_{1,2}= \frac{1}{2} O \pm \frac{1}{2} GOG$, respectively. 
\end{lemma}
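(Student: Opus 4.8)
The plan is to compute the $\theta$-average of the bilinear quantity directly by exploiting the fact that $G$ is a Hermitian unitary, so $G^2 = I$ and $W = e^{-i\theta G} = \cos\theta\, I - i\sin\theta\, G$. First I would substitute this explicit form into $W \rho W^\dag$ and expand, obtaining
\begin{equation*}
W \rho W^\dag = \cos^2\theta\, \rho + \sin^2\theta\, G\rho G - i\sin\theta\cos\theta\,(G\rho - \rho G).
\end{equation*}
Taking the trace against $O$ gives $\Tr[O W\rho W^\dag] = \cos^2\theta\,\Tr[O\rho] + \sin^2\theta\,\Tr[OG\rho G] - i\sin\theta\cos\theta\,\Tr[O(G\rho-\rho G)]$. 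The next step is to form the product of two such expressions, one with $\rho_1$ and one with $\rho_2$, and then integrate term by term over $\theta \in [0,2\pi]$ with the uniform measure.

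The key simplification comes from the elementary trigonometric moments: $\E[\cos^4\theta] = \E[\sin^4\theta] = 3/8$, $\E[\cos^2\theta\sin^2\theta] = 1/8$, $\E[\sin^2\theta\cos^2\theta] = 1/8$ again for the cross term in $\sin\cos$, and crucially every monomial with an \emph{odd} total power of $\sin\theta$ or of $\cos\theta$ averages to zero — this kills all cross terms between the $\cos^2\theta/\sin^2\theta$ pieces and the $i\sin\theta\cos\theta$ piece. So after averaging, only three groups survive: the $\cos^2\cos^2$, $\sin^2\sin^2$, $\cos^2\sin^2$ (both orderings) combinations among the ``diagonal'' terms, and the $(\sin\cos)(\sin\cos)$ square of the ``commutator'' terms. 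Collecting the surviving diagonal contributions one regroups
\begin{equation*}
\tfrac{3}{8}\Tr[O\rho_1]\Tr[O\rho_2] + \tfrac{3}{8}\Tr[OG\rho_1 G]\Tr[OG\rho_2 G] + \tfrac{1}{8}\big(\Tr[O\rho_1]\Tr[OG\rho_2 G] + \Tr[OG\rho_1 G]\Tr[O\rho_2]\big),
\end{equation*}
and then I would recognize $\Tr[OG\rho G] = \Tr[GOG\,\rho]$ and complete the square using $O_1 = \tfrac12(O + GOG)$, which yields exactly $\Tr[O_1\rho_1]\Tr[O_1\rho_2] + \tfrac12\Tr[O_2\rho_1]\Tr[O_2\rho_2]$ once one also notes that the $\tfrac14(\Tr[O\rho_1] - \Tr[GOG\rho_1])(\cdots)$ leftover is $\tfrac12\Tr[O_2\rho_1]\Tr[O_2\rho_2]$ with $O_2 = \tfrac12(O - GOG)$. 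For the commutator square, $-i(G\rho - \rho G)$ paired with $O$ must be rewritten: $\Tr[O(G\rho-\rho G)] = \Tr[(OG - GO)\rho] = \Tr[-[G,O]\rho]$, and one checks $[G,O] = GO - OG$ relates to $O_2$ via $GO_2 - O_2 G$-type manipulations; the target form $\Tr[iO_2 G\rho]$ should emerge after using $G^2=I$ to write $OG - GO = G(GOG - O)\cdot(\text{sign}) = -2GO_2$ or similar, so that the $\tfrac18$ prefactor times the $\Tr[(OG-GO)\rho_1]\Tr[(OG-GO)\rho_2]$ becomes $\tfrac12\Tr[iO_2G\rho_1]\Tr[iO_2G\rho_2]$.

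The main obstacle I anticipate is purely bookkeeping in that last paragraph: correctly tracking the factors of $i$, the signs, and the $G^2=I$ reductions so that the commutator term $-i(G\rho-\rho G)$ traced against $O$ collapses cleanly into $\Tr[iO_2 G\rho]$ with the right normalization, and making sure no surviving cross term between the diagonal and commutator sectors has been overlooked (the odd-moment vanishing argument must be invoked carefully since each factor contributes its own $\sin\cos$). Everything else is a finite, mechanical trigonometric integral. I would organize the write-up by (i) stating the $W = \cos\theta - i\sin\theta\,G$ decomposition, (ii) listing the needed moments $\E[\cos^4], \E[\sin^4], \E[\sin^2\cos^2]$ and the vanishing of odd moments, (iii) expanding and averaging, and (iv) the algebraic regrouping into $O_1, O_2$.
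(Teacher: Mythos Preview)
Your proposal is correct and follows essentially the same approach as the paper: expand $W=\cos\theta\,I-i\sin\theta\,G$, multiply, and average the trigonometric monomials. The only difference is organizational: the paper introduces $O_1,O_2$ \emph{before} averaging and uses the double-angle identities to write $\Tr[OW\rho W^\dag]=\Tr[O_1\rho]+\cos 2\theta\,\Tr[O_2\rho]-\sin 2\theta\,\Tr[iO_2G\rho]$, so the average reduces to the trivial moments $\E[\cos^2 2\theta]=\E[\sin^2 2\theta]=\tfrac12$ with all cross terms vanishing---this sidesteps the fourth-moment bookkeeping and the sign chase in your commutator step (where, for the record, $OG-GO=2O_2G$ and the $(-i)^2$ from squaring $-i\sin\theta\cos\theta$ supplies exactly the sign needed to match $\Tr[iO_2G\rho_1]\Tr[iO_2G\rho_2]$).
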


\begin{proof}

By replacing the term 
\begin{equation*}
W=e^{-i \theta G}= I \cos \theta -i G \sin \theta,
\end{equation*}
we have
\begin{align}
&\ \quad {\rm Tr}[O W \rho_1 W^\dag] \nonumber \\
&= {\rm Tr} [O (I \cos \theta -i G \sin \theta) \rho_1 (I \cos \theta +i G \sin \theta) ] \nonumber \\
&= \Tr [O_1 \rho_1] + \Tr [O_2 \rho_1] \cos 2\theta - \sin 2\theta \Tr [iO_2 G \rho_1], \label{tqnn_lemma_wawbwcwd_1_3}
\end{align}
where hermitians $O_1$ and $O_2$ are commuting and anti-commuting parts of $O$ with respect to $G$, respectively, i.e.
\begin{align*}
O_1 + O_2 &= O,\\
O_1 G - G O_1 &= \frac{OG+GO}{2} - \frac{GO+OG}{2} = 0,\\
O_2 G + G O_2 &= \frac{OG-GO}{2} + \frac{GO-OG}{2} = 0. 	
\end{align*}
We remark that $i O_2 G$ could serve as a hermitian observable. Similar formulation also holds for $\rho_2$, i.e.
\begin{align}
&\ \quad {\rm Tr}[O W \rho_2 W^\dag] \nonumber \\
&= \Tr [O_1 \rho_2] + \Tr [O_2 \rho_2] \cos 2\theta - \sin 2\theta \Tr [iO_2 G \rho_2].\label{tqnn_lemma_wawbwcwd_2_2}
\end{align}
Combining Eqs.~(\ref{tqnn_lemma_wawbwcwd_1_3}) and (\ref{tqnn_lemma_wawbwcwd_2_2}), we have
\begin{align}
&\ \quad \mathop{\E}_{\theta} {\rm Tr}[O W \rho_1 W^\dag] {\rm Tr} [O W \rho_2 W^\dag] \nonumber \\
&= \mathop{\E}_{\theta} \left( \Tr [O_1 \rho_1] + \Tr [O_2 \rho_1] \cos 2\theta - \sin 2\theta \Tr [iO_2 G \rho_1] \right) \nonumber \\
&\ \quad \cdot \left( \Tr [O_1 \rho_2] + \Tr [O_2 \rho_2] \cos 2\theta - \sin 2\theta \Tr [iO_2 G \rho_2] \right) \nonumber \\
&= \Tr [O_1 \rho_1] \Tr [O_1 \rho_2] + \frac{1}{2} \Tr [O_2 \rho_1] \Tr [O_2 \rho_2] \nonumber \\
&\qquad + \frac{1}{2} \Tr [iO_2 G \rho_1] \Tr [iO_2 G \rho_2], \label{tqnn_lemma_wawbwcwd_3_5}
\end{align}
where Eq.~(\ref{tqnn_lemma_wawbwcwd_3_5}) is obtained by calculating expectation terms.

\end{proof}

\begin{lemma}\label{tqnn_lemma_w_plus_minus}
Let $\theta$ be a variable with uniform distribution in $[0,2\pi]$. Let $G$ be {an arbitrary hermitian unitary}, and denote $W_{\pm}=e^{-i(\theta \pm \frac{\pi}{4}) G}$.  Let $O$ be an arbitrary hermitian observable and . 
Then
\begin{multline}
\E_{\theta} \left( {\rm Tr} [O W_+ \rho W_{+}^{\dag}] - {\rm Tr} [O W_- \rho W_-^\dag ] \right)^2 \\
=2 {\rm Tr}[O_2 \rho]^2 + 2 {\rm Tr}[iO_2 G \rho]^2, 
\end{multline}
where $\rho$ is a quantum state and $O_{2}= \frac{1}{2} O - \frac{1}{2} GOG $. 
\end{lemma}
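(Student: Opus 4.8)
The plan is to reduce the claim to the same trigonometric expansion used in the proof of Lemma~\ref{tqnn_lemma_wawbwcwd}, now evaluated at the two shifted angles $\theta\pm\tfrac{\pi}{4}$, and then to integrate the resulting square over the uniform parameter.

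First I would expand both shifted unitaries. Since $G$ is a hermitian unitary, $G^2=I$, so $W_\pm = I\cos(\theta\pm\tfrac{\pi}{4}) - iG\sin(\theta\pm\tfrac{\pi}{4})$. Substituting into $\Tr[OW_\pm\rho W_\pm^{\dag}]$ exactly as in Eq.~(\ref{tqnn_lemma_wawbwcwd_1_3}), with $\theta$ replaced by $\theta\pm\tfrac{\pi}{4}$ and $\rho_1$ replaced by $\rho$, and using the decomposition $O=O_1+O_2$ into the $G$-commuting part $O_1=\tfrac12 O+\tfrac12 GOG$ and the $G$-anticommuting part $O_2=\tfrac12 O-\tfrac12 GOG$, gives
\begin{equation*}
\Tr[OW_\pm\rho W_\pm^{\dag}] = \Tr[O_1\rho] + \Tr[O_2\rho]\cos\!\big(2\theta\pm\tfrac{\pi}{2}\big) - \Tr[iO_2G\rho]\sin\!\big(2\theta\pm\tfrac{\pi}{2}\big).
\end{equation*}

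Next I would simplify the difference using $\cos(2\theta\pm\tfrac{\pi}{2})=\mp\sin 2\theta$ and $\sin(2\theta\pm\tfrac{\pi}{2})=\pm\cos 2\theta$. The $\Tr[O_1\rho]$ contributions cancel, leaving
\begin{equation*}
\Tr[OW_+\rho W_+^{\dag}] - \Tr[OW_-\rho W_-^{\dag}] = -2\sin 2\theta\,\Tr[O_2\rho] - 2\cos 2\theta\,\Tr[iO_2G\rho].
\end{equation*}
Squaring this and taking $\E_\theta$, I would use $\E_\theta[\sin^2 2\theta]=\E_\theta[\cos^2 2\theta]=\tfrac12$ and $\E_\theta[\sin 2\theta\cos 2\theta]=0$ for $\theta$ uniform on $[0,2\pi]$; the cross term vanishes and what remains is $2\Tr[O_2\rho]^2+2\Tr[iO_2G\rho]^2$, as claimed.

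There is essentially no deep obstacle here: the entire argument is the bookkeeping of the half-angle phase shifts together with the elementary second moments of $\sin 2\theta$ and $\cos 2\theta$. The one point I would check explicitly is that $iO_2G$ is hermitian — this follows from $\{O_2,G\}=0$, since $(iO_2G)^{\dag}=-iGO_2=iO_2G$ — so that $\Tr[iO_2G\rho]$ is real and its square is the intended nonnegative quantity, matching the role of $iO_2G$ as a hermitian observable in Lemma~\ref{tqnn_lemma_wawbwcwd}.
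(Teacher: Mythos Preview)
Your proposal is correct and follows essentially the same route as the paper: expand $W_\pm$ using $G^2=I$, apply the decomposition $O=O_1+O_2$ to obtain the same difference $-2\sin 2\theta\,\Tr[O_2\rho]-2\cos 2\theta\,\Tr[iO_2G\rho]$, and then integrate the square using the second moments of $\sin 2\theta$ and $\cos 2\theta$. Your explicit check that $iO_2G$ is hermitian is a nice addition but is not needed for the algebra to go through.
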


\begin{proof}

First we denote $\theta_\pm = \theta \pm \frac{\pi}{4}$ for convenience.
By replacing terms
\begin{equation*}
W_{\pm}=e^{-i\theta_\pm G} = I \cos \theta_\pm  - i G \sin \theta_\pm,
\end{equation*}
we obtain
\begin{align*}
{}&{} {\rm Tr} [O W_+ \rho W_{+}^{\dag}] - {\rm Tr} [O W_- \rho W_-^\dag ] \\
={}&{} \Tr [O (I\cos \theta_+ -iG \sin \theta_+) \rho (I\cos \theta_+ +iG \sin \theta_+)] \\
-{}&{} \Tr [O (I\cos \theta_- -iG \sin \theta_-) \rho (I\cos \theta_- +iG \sin \theta_-)] \\
={}&{} \Tr[O_1 \rho] + \Tr[O_2 \rho] \cos 2\theta_+ - \sin 2\theta_+ \Tr[iO_2 G \rho] \\
-{}&{} \Tr[O_1 \rho] - \Tr[O_2 \rho] \cos 2\theta_- + \sin 2\theta_- \Tr[iO_2 G \rho] \\
={}&{} -2 \sin 2\theta \Tr[O_2 \rho] -2\cos 2\theta \Tr[iO_2 G \rho],
\end{align*}
where hermitians $O_{1,2}= \frac{1}{2} O \pm \frac{1}{2} GOG$ are commuting and anti-commuting parts of $O$ with respect to $G$, respectively, i.e.
\begin{align*}
O_1 + O_2 &= O,\\
O_1 G - G O_1 &= \frac{OG+GO}{2} - \frac{GO+OG}{2} = 0,\\
O_2 G + G O_2 &= \frac{OG-GO}{2} + \frac{GO-OG}{2} = 0. 	
\end{align*}
We remark that $i O_2 G$ could be served as a hermitian observable.
Thus, we have
\begin{align}
{}&{}\E_{\theta} \left( {\rm Tr} [O W_+ \rho W_{+}^{\dag}] - {\rm Tr} [O W_- \rho W_-^\dag ] \right)^2 \nonumber \\
={}&{} 2 \Tr[O_2 \rho]^2 + 2 \Tr[iO_2 G \rho]^2, \label{tqnn_lemma_w_plus_minus_2_2}
\end{align}
where Eq.~(\ref{tqnn_lemma_w_plus_minus_2_2}) is obtained by calculating expectation terms.

\end{proof}

\section{Proof of Theorem~\ref{tqnn_appendix_lemma_cost}}

\label{tqnn_app_proof_cost}

\begin{proof}
We first introduce several notations for convenience. 
\medskip
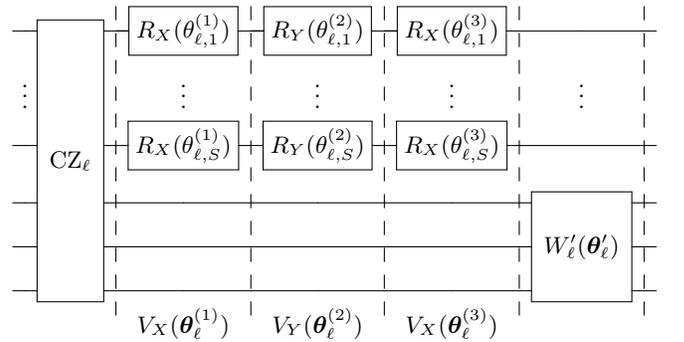
\begin{figure}[H]
\centerline{
\Qcircuit @C=0.5em @R=0.9em {
\lstick{} & \qw & \multigate{5}{{\rm CZ}_\ell} \barrier[-0em]{5} & \qw &  \gate{R_X(\theta_{\ell,1}^{(1)})} \qw \barrier[-0em]{5} & \qw & \gate{R_Y(\theta_{\ell,1}^{(2)})} \qw \barrier[-0em]{5} & \qw & \gate{R_X(\theta_{\ell,1}^{(3)})} \qw \barrier[-0em]{5} & \qw & \qw \barrier[-0em]{5} & \qw & \qw \\
\lstick{} & {\vdots} & \nghost{{\rm CZ}_\ell} &  & {\vdots} &  & {\vdots} &  & {\vdots} &  & {\vdots} &  &  \\
\lstick{} & \qw & \ghost{{\rm CZ}_\ell} & \qw & \gate{R_X(\theta_{\ell,S}^{(1)})} & \qw & \gate{R_Y(\theta_{\ell,S}^{(2)})} & \qw & \gate{R_X(\theta_{\ell,S}^{(3)})} & \qw & \qw & \qw & \qw \\
\lstick{} & \qw & \ghost{{\rm CZ}_\ell} & \qw & \qw & \qw & \qw & \qw & \qw & \qw & \multigate{2}{W'_\ell (\boldsymbol{\theta}_\ell')} & \qw & \qw \\
\lstick{} & \qw & \ghost{{\rm CZ}_\ell} & \qw & \qw & \qw & \qw & \qw & \qw & \qw & \ghost{W'_\ell (\boldsymbol{\theta}_\ell')} & \qw & \qw \\
\lstick{} & \qw & \ghost{{\rm CZ}_\ell} & \qw & \qw & \qw & \qw & \qw & \qw & \qw & \ghost{W'_\ell (\boldsymbol{\theta}_\ell')} & \qw & \qw \\
& & & & V_X (\boldsymbol{\theta}_\ell^{(1)}) & & V_Y (\boldsymbol{\theta}_\ell^{(2)}) & & V_X (\boldsymbol{\theta}_\ell^{(3)}) & & & & 
}
}
\caption{Decomposition of a controlled layer block into five layers.}
\label{tqnn_appendix_circuit}
\end{figure}

Denote 
\begin{align}
V_{X}(\boldsymbol{\theta}_{\ell}^{(1)}) &\equiv R_X(\theta_{\ell,1}^{(1)}) \otimes \cdots \otimes R_X(\theta_{\ell,S}^{(1)})\otimes \sigma_0 \otimes \cdots \otimes \sigma_0, \label{tqnn_appendix_lemma_cost_1_1}\\
V_{Y}(\boldsymbol{\theta}_{\ell}^{(2)}) &\equiv R_Y(\theta_{\ell,1}^{(2)}) \otimes \cdots \otimes R_Y(\theta_{\ell,S}^{(2)}) \otimes \sigma_0 \otimes \cdots \otimes \sigma_0, \label{tqnn_appendix_lemma_cost_1_2}\\
V_{X}(\boldsymbol{\theta}_{\ell}^{(3)}) &\equiv R_X(\theta_{\ell,1}^{(3)}) \otimes \cdots \otimes R_X(\theta_{\ell,S}^{(3)}) \otimes \sigma_0 \otimes \cdots \otimes \sigma_0,  \label{tqnn_appendix_lemma_cost_1_3}
\end{align}
where $\boldsymbol{\theta}_\ell^{(i)}=(\theta_{\ell,1}^{(i)},\cdots,\theta_{\ell,S}^{(i)}),\ \forall i \in \{1,2,3\}$.
Then a controlled-layer block in Figure~\ref{tqnn_appendix_circuit} could be divided into five parts as follows:
\begin{equation}\label{tqnn_appendix_lemma_cost_gate_v}
V_\ell(\boldsymbol{\theta}_\ell) = (I^{\otimes S} \otimes W_\ell ' (\boldsymbol{\theta}_\ell ')) V_X (\boldsymbol{\theta}_\ell^{(3)}) V_Y (\boldsymbol{\theta}_\ell^{(2)}) V_X (\boldsymbol{\theta}_\ell^{(1)}) {\rm CZ}_\ell,	
\end{equation}
where the parameter $\boldsymbol{\theta}_\ell = (\boldsymbol{\theta}_\ell^{(1)}, \boldsymbol{\theta}_\ell^{(2)}, \boldsymbol{\theta}_\ell^{(3)}, \boldsymbol{\theta}_\ell ')$.
We abbreviate  the term $I^{\otimes S} \otimes W_\ell ' (\boldsymbol{\theta}_\ell ')$ as 
 $W_\ell ' (\boldsymbol{\theta}_\ell ')$ in the following for convenience.

In the following proof, we consider $V_{X}(\boldsymbol{\theta}_{\ell}^{(1)})$, $V_{Y}(\boldsymbol{\theta}_{\ell}^{(2)})$, $V_{X}(\boldsymbol{\theta}_{\ell}^{(3)})$, $W_\ell ' (\boldsymbol{\theta}_{\ell} ')$ and ${\rm CZ}_\ell$ as separate layers.
We denote by $\rho_{k}$ the state after the $k$th layer, $\forall k \in \{0,1,\cdots,5L\}$,
\begin{equation}\label{tqnn_appendix_lemma_cost_rho_k}
\rho_k := \left\{
\begin{aligned}
& \left( \prod_{i=\ell}^{1} V_i (\boldsymbol{\theta}_i) \right) \rho_{\text{in}} \left( \prod_{i=1}^{\ell} V_i (\boldsymbol{\theta}_i)^\dag \right) &\ (k=5\ell), \\
& {\rm CZ}_{\ell+1} \rho_{5\ell} {\rm CZ}_{\ell+1}^{\dag} &\ (k=5\ell+1), \\
& V_{X}(\boldsymbol{\theta}_{\ell+1}^{(1)}) \rho_{5\ell+1} V_{X}(\boldsymbol{\theta}_{\ell+1}^{(1)})^{\dag} &\ (k=5\ell+2), \\ 
& V_{Y}(\boldsymbol{\theta}_{\ell+1}^{(2)}) \rho_{5\ell+2} V_{Y}(\boldsymbol{\theta}_{\ell+1}^{(2)})^{\dag} &\ (k=5\ell+3), \\ 
& V_{X}(\boldsymbol{\theta}_{\ell+1}^{(3)}) \rho_{5\ell+3} V_{X}(\boldsymbol{\theta}_{\ell+1}^{(3)})^{\dag} &\ (k=5\ell+4).
\end{aligned}
\right.
\end{equation}

Now we begin to prove Eq.~(\ref{tqnn_appendix_lemma_cost_equation}). We denote all parameters in the circuit by $\boldsymbol{\theta}=(\boldsymbol{\theta}_1, \boldsymbol{\theta}_2, \cdots, \boldsymbol{\theta}_L)$. Rewrite the formulation of $f(\boldsymbol{\theta})$ in detail:
\begin{align}
&\ \quad \mathop{\E}\limits_{\boldsymbol{\theta}} \left( \Tr \left[ \sigma_{\boldsymbol{i}} V(\boldsymbol{\theta}) \rho_{\text{in}} V(\boldsymbol{\theta})^{\dag}\right] \right)^2 \notag\\
&= \mathop{\E}\limits_{\boldsymbol{\theta}} \left( \Tr \left[ \sigma_{\boldsymbol{i}} \rho_{5L}  \right] \right)^2 \label{tqnn_appendix_lemma_cost_2_1} \\
&= \mathop{\E}\limits_{\boldsymbol{\theta}_1, \cdots, \boldsymbol{\theta}_{L-1}, \boldsymbol{\theta}_{L}^{(1)}, \boldsymbol{\theta}_{L}^{(2)}, \boldsymbol{\theta}_{L}^{(3)}, \boldsymbol{\theta}_{L}'} \Tr \left[ \sigma_{\boldsymbol{i}} W_{L}'(\boldsymbol{\theta}_L ')  \rho_{5L-1} W_{L}'(\boldsymbol{\theta}_L ')^\dag \right]^2 \label{tqnn_appendix_lemma_cost_2_2} \\
&= \mathop{\E}\limits_{\boldsymbol{\theta}_1, \cdots, \boldsymbol{\theta}_{L-1}, \boldsymbol{\theta}_{L}^{(1)}, \boldsymbol{\theta}_{L}^{(2)}, \boldsymbol{\theta}_{L}^{(3)} } \Tr \left[ \sigma_{\boldsymbol{i}} \rho_{5L-1} \right]^2 \label{tqnn_appendix_lemma_cost_2_3} \\
&= \mathop{\E}\limits_{\boldsymbol{\theta}_1, \cdots, \boldsymbol{\theta}_{L-1}, \boldsymbol{\theta}_{L}^{(1)}, \boldsymbol{\theta}_{L}^{(2)}, \boldsymbol{\theta}_{L}^{(3)}} \Tr \left[ \sigma_{\boldsymbol{i}} V_{X}(\boldsymbol{\theta}_{L}^{(3)}) \rho_{5L-2} V_{X}(\boldsymbol{\theta}_{L}^{(3)})^\dag \right]^2.
\label{tqnn_appendix_lemma_cost_2_4} 
\end{align}
Eqs.~(\ref{tqnn_appendix_lemma_cost_2_1}) and (\ref{tqnn_appendix_lemma_cost_2_2}) follow from the definition of $\rho_{k}$ in Eq.~(\ref{tqnn_appendix_lemma_cost_rho_k}).
Eq.~(\ref{tqnn_appendix_lemma_cost_2_3}) holds since
\begin{equation}\label{tqnn_appendix_lemma_cost_w_trival}
W_{L}'(\boldsymbol{\theta}_L ')^\dag \sigma_{\boldsymbol{i}} W_{L}'(\boldsymbol{\theta}_L ') =  \sigma_{\boldsymbol{i}}.
\end{equation}
Eq.~(\ref{tqnn_appendix_lemma_cost_2_4}) is obtained by using the definition of $\rho_{k}$ in Eq.~(\ref{tqnn_appendix_lemma_cost_rho_k}) again. 

Continuing from Eq.~(\ref{tqnn_appendix_lemma_cost_2_4}), we have 
\begin{align}
&\ \quad \mathop{\E}\limits_{\boldsymbol{\theta}} \left( \Tr \left[ \sigma_{\boldsymbol{i}} V(\boldsymbol{\theta}) \rho_{\text{in}} V(\boldsymbol{\theta})^{\dag}\right] \right)^2\notag \\
&\geq \mathop{\E}\limits_{\boldsymbol{\theta}_1, \cdots, \boldsymbol{\theta}_{L-1}, \boldsymbol{\theta}_{L}^{(1)}, \boldsymbol{\theta}_{L}^{(2)}} \frac{\Tr \left[ \sigma_{\boldsymbol{3|i;2}} V_{Y}(\boldsymbol{\theta}_{L}^{(2)}) \rho_{5L-3} V_{Y}(\boldsymbol{\theta}_{L}^{(2)})^\dag \right]^2}{2^S} \label{tqnn_appendix_lemma_cost_2_5} \\
&\geq \mathop{\E}\limits_{\boldsymbol{\theta}_1, \cdots, \boldsymbol{\theta}_{L-1}, \boldsymbol{\theta}_{L}^{(1)} } \frac{ \Tr \left[ \sigma_{\boldsymbol{3|i}} V_{X}(\boldsymbol{\theta}_{L}^{(1)}) \rho_{5L-4} V_{X}(\boldsymbol{\theta}_{L}^{(1)})^\dag \right]^2 }{2^{2S}} \label{tqnn_appendix_lemma_cost_2_6} \\
&\geq \left(\frac{1}{2} \right)^{3S} \mathop{\E}\limits_{\boldsymbol{\theta}_1, \cdots, \boldsymbol{\theta}_{L-1}} \Tr \left[ \sigma_{\boldsymbol{3|i}} \rho_{5L-4} \right]^2 \label{tqnn_appendix_lemma_cost_2_7} \\
&= \left(\frac{1}{2} \right)^{3S} \mathop{\E}\limits_{\boldsymbol{\theta}_1, \cdots, \boldsymbol{\theta}_{L-1}} \Tr \left[ \sigma_{\boldsymbol{3|i}} \rho_{5L-5} \right]^2, \label{tqnn_appendix_lemma_cost_2_8}
\end{align}
where $\boldsymbol{3|i;2}$ denotes the index by replacing non-zero elements of $\boldsymbol{i}=(i_1,\cdots,i_N)$ with $3$ if the original value is $2$. Eq.~(\ref{tqnn_appendix_lemma_cost_2_5}) is derived by using Lemma~\ref{tqnn_lemma_wawbwcwd} for the $R_X$ rotation case for $S$ times. 
Eqs.~(\ref{tqnn_appendix_lemma_cost_2_6}) and (\ref{tqnn_appendix_lemma_cost_2_7}) are derived similarly by using Lemma~\ref{tqnn_lemma_wawbwcwd}.
Eq.~(\ref{tqnn_appendix_lemma_cost_2_8}) is derived by noticing that for $j,k \in \{0,3\}$, the term $\sigma_j \otimes \sigma_k$ remains the same after the CZ operation.

Now we move forward from Eq.~(\ref{tqnn_appendix_lemma_cost_2_8}). 
By applying Eqs.~(\ref{tqnn_appendix_lemma_cost_2_2})--(\ref{tqnn_appendix_lemma_cost_2_8}) inductively for $L-1$ times, we obtain 
\begin{align}
\text{Eq.~(\ref{tqnn_appendix_lemma_cost_2_8})} \geq &{} \left(\frac{1}{2} \right)^{3LS} {\Tr} \left[ \sigma_{\boldsymbol{3|i}} \rho_{\text{in}} \right]^2 \label{tqnn_appendix_lemma_cost_3_5}.
\end{align}

Thus, we prove Equation~(\ref{tqnn_appendix_lemma_cost_equation}).

\end{proof}

\section{Proof of Theorem~\ref{tqnn_appendix_theorem}}

\label{tqnn_app_proof_gradient}

\begin{proof}

First, we notice that the norm of the whole gradient is lower bounded by that of particle derivatives summed over a part of parameters, i.e. 
\begin{align}
\mathop{\E}\limits_{\boldsymbol{\theta}} \|\nabla_{\boldsymbol{\theta}} f \|^2 {} & \geq 
\sum_{\ell=1}^{L-1} \sum_{n=1}^{S} \sum_{j=1}^{3}  \mathop{\E}\limits_{\boldsymbol{\theta}} \left(\frac{\partial f }{\partial \theta_{\ell,n}^{(j)}}\right)^2 , \label{tqnn_ttn_proof_2}
\end{align}
where $\theta_{\ell,n}^{(j)}$ denotes the parameter of the $j$th single-qubit gate on the $n$th qubit in the $\ell$th CL block.
Thus, we could obtain Eq.~(\ref{tqnn_appendix_theorem_eq}) if 
\begin{equation}\label{tqnn_appendix_lemma_partial_derivative_equation}
\mathop{\E}\limits_{\boldsymbol{\theta}} \left( \frac{\partial f}{\partial \theta_{\ell,n}^{(j)} } \right)^2 \geq \frac{4 \left( {\rm Tr}[\sigma_{\boldsymbol{3|i}} \rho_{\text{in}}] \right)^2 }{8^{LS}}
\end{equation}
holds for all $\ell \in \{1,\cdots,L-1\}$, $n \in \{1,\cdots,S\}$, and $j \in \{1,2,3\}$. 

Now we begin to prove Eq.~(\ref{tqnn_appendix_lemma_partial_derivative_equation}).
Our main idea is to integrate the square of the partial derivative of $f$ with respect to $\boldsymbol{\theta}=(\boldsymbol{\theta}_1,\cdots,\boldsymbol{\theta}_{L})$ by using Lemma~\ref{tqnn_lemma_wawbwcwd} and Lemma~\ref{tqnn_lemma_w_plus_minus}.
For convenience, we follow the definition of $V_{X}(\cdot)$ in Eqs.~(\ref{tqnn_appendix_lemma_cost_1_1}), (\ref{tqnn_appendix_lemma_cost_1_3}), $V_{Y}(\cdot)$ in Eq.~(\ref{tqnn_appendix_lemma_cost_1_2}), ${\rm CZ}_\ell$ and $\rho_{k}$ in Eq.~(\ref{tqnn_appendix_lemma_cost_rho_k}).
In addition, we denote
\begin{align}
&{}\ \quad \rho_{k,\ell,n,j,\pm} := \nonumber \\
&{} \left\{
\begin{aligned}
& \rho_k (\boldsymbol{\theta}_1,  \cdots)  &\ {(k \leq 5\ell-4)}, \\
& \rho_k (\boldsymbol{\theta}_1, \cdots, \boldsymbol{\theta}_{\ell-1}, \boldsymbol{\theta}_{\ell,n,j,\pm}, \boldsymbol{\theta}_{\ell+1}, \cdots )  &\ {(k > 5\ell-4)}, 
\end{aligned}
\right.\label{tqnn_appendix_lemma_partial_derivative_upper_bound_rho_k_l_n}
\end{align}
where $\rho_k$ follows from the definition in Eq.~(\ref{tqnn_appendix_lemma_cost_rho_k}), and $\boldsymbol{\theta}_{\ell,n,j,\pm}$ differs from $\boldsymbol{\theta}_\ell$ by plus or minus $\frac{\pi}{4}$ on the  component $\theta_{\ell,n}^{(j)}$.

Next, we rewrite the formulation in Eq.~(\ref{tqnn_appendix_lemma_partial_derivative_equation}),
\begin{align}
&{}\ \quad \mathop{\E}\limits_{\boldsymbol{\theta}} \left( \frac{\partial }{\partial \theta_{\ell,n}^{(j)}} {\rm Tr} \left[ \sigma_{\boldsymbol{i}} V(\boldsymbol{\theta}) \rho_{\text{in}} V(\boldsymbol{\theta})^{\dag}\right] \right)^2 \label{tqnn_appendix_lemma_partial_derivative_bound_1_1} \\
&= \mathop{\E}\limits_{\boldsymbol{\theta}_1} \cdots \mathop{\E}\limits_{\boldsymbol{\theta}_L} \left( \frac{\partial }{\partial \theta_{\ell,n}^{(j)}} {\rm Tr} \left[ \sigma_{\boldsymbol{i}} \rho_{5L} \right] \right)^2 \label{tqnn_appendix_lemma_partial_derivative_bound_1_2} \\
&= \mathop{\E}\limits_{\boldsymbol{\theta}_1} \cdots \mathop{\E}\limits_{\boldsymbol{\theta}_L} \left( {\rm Tr} \left[ \sigma_{\boldsymbol{i}} \rho_{5L,\ell,n,j,+} \right] -  {\rm Tr} \left[ \sigma_{\boldsymbol{i}} \rho_{5L,\ell,n,j,-} \right] \right)^2. \label{tqnn_appendix_lemma_partial_derivative_bound_1_3}
\end{align}
Eq.~(\ref{tqnn_appendix_lemma_partial_derivative_bound_1_1}) follows from the formulation of the cost function $f$. 
Eq.~(\ref{tqnn_appendix_lemma_partial_derivative_bound_1_2}) follows from Eq.~(\ref{tqnn_appendix_lemma_cost_rho_k}). 
Eq.~(\ref{tqnn_appendix_lemma_partial_derivative_bound_1_3}) follows from the parameter-shift rule in Lemma~\ref{tqnn_params_shifting}. 

We proceed from Eq.~(\ref{tqnn_appendix_lemma_partial_derivative_bound_1_3}).
Firstly, by taking the expectation with the uniform distribution of parameters in $\boldsymbol{\theta}_{L}$, we obtain
\begin{align}
&\mathop{\E}\limits_{\boldsymbol{\theta}_L} \left( {\rm Tr} \left[ \sigma_{\boldsymbol{i}} \rho_{5L,\ell,n,j,+} \right] -  {\rm Tr} \left[ \sigma_{\boldsymbol{i}} \rho_{5L,\ell,n,j,-} \right] \right)^2  \notag\\
={}& \mathop{\E}\limits_{\boldsymbol{\theta}_{L}^{(1)}} \mathop{\E}\limits_{\boldsymbol{\theta}_{L}^{(2)}} \mathop{\E}\limits_{\boldsymbol{\theta}_{L}^{(3)}}  \mathop{\E}\limits_{\boldsymbol{\theta}_{L}'} \Big( {\rm Tr} \big[ \sigma_{\boldsymbol{i}} W_L'(\boldsymbol{\theta}_L')  \rho_{5L-1,\ell,n,j,+} W_L'(\boldsymbol{\theta}_L')^{\dag} \big] \nonumber \\
&\quad -  {\rm Tr} \big[ \sigma_{\boldsymbol{i}} W_L'(\boldsymbol{\theta}_L')  \rho_{5L-1,\ell,n,j,-} W_L'(\boldsymbol{\theta}_L')^{\dag}  \big] \Big)^2  \label{tqnn_appendix_lemma_partial_derivative_2_1} \\
={}& \mathop{\E}\limits_{\boldsymbol{\theta}_{L}^{(1)}} \mathop{\E}\limits_{\boldsymbol{\theta}_{L}^{(2)}} \mathop{\E}\limits_{\boldsymbol{\theta}_{L}^{(3)}} \Big( {\rm Tr} \big[ \sigma_{\boldsymbol{i}} \rho_{5L-1,\ell,n,j,+} \big] \nonumber \\
&\quad -  {\rm Tr} \big[ \sigma_{\boldsymbol{i}}  \rho_{5L-1,\ell,n,j,-} \big] \Big)^2  \label{tqnn_appendix_lemma_partial_derivative_2_2} \\
={}& \mathop{\E}\limits_{\boldsymbol{\theta}_{L}^{(1)}} \mathop{\E}\limits_{\boldsymbol{\theta}_{L}^{(2)}} \mathop{\E}\limits_{\boldsymbol{\theta}_{L}^{(3)}}  \Big( {\rm Tr} \big[ \sigma_{\boldsymbol{i}} V_{X}(\boldsymbol{\theta}_{L}^{(3)}) \rho_{5L-2,\ell,n,j,+} V_{X}(\boldsymbol{\theta}_{L}^{(3)})^\dag \big] \nonumber \\
&\quad -  {\rm Tr} \big[ \sigma_{\boldsymbol{i}} V_{X}(\boldsymbol{\theta}_{L}^{(3)})  \rho_{5L-2,\ell,n,j,-} V_{X}(\boldsymbol{\theta}_{L}^{(3)})^\dag \big] \Big)^2,  \label{tqnn_appendix_lemma_partial_derivative_2_3} 
\end{align}
where Eq.~(\ref{tqnn_appendix_lemma_partial_derivative_2_1}) follows from the definition of $\rho_{k,\ell,n,j,\pm}$ in Eq.~(\ref{tqnn_appendix_lemma_partial_derivative_upper_bound_rho_k_l_n}). 
Eq.~(\ref{tqnn_appendix_lemma_partial_derivative_2_2}) is obtained by using Eq.~(\ref{tqnn_appendix_lemma_cost_w_trival}).
Eq.~(\ref{tqnn_appendix_lemma_partial_derivative_2_3}) follows from the definition of $\rho_{k,\ell,n,j,\pm}$ in Eq.~(\ref{tqnn_appendix_lemma_partial_derivative_upper_bound_rho_k_l_n}) again.

\medskip
Then, we have
\begin{align}
\geq{}& \frac{1}{2^S} \mathop{\E}\limits_{\boldsymbol{\theta}_{L}^{(1)}} \mathop{\E}\limits_{\boldsymbol{\theta}_{L}^{(2)}}  \Big( {\rm Tr} \big[ \sigma_{\boldsymbol{3|i;2}} \rho_{5L-2,\ell,n,j,+}  \big] \nonumber \\
&\quad -  {\rm Tr} \big[ \sigma_{\boldsymbol{3|i;2}} \rho_{5L-2,\ell,n,j,-} \big] \Big)^2  \label{tqnn_appendix_lemma_partial_derivative_2_4} \\
={}& \frac{1}{2^S} \mathop{\E}\limits_{\boldsymbol{\theta}_{L}^{(1)}} \mathop{\E}\limits_{\boldsymbol{\theta}_{L}^{(2)}}  \Big( {\rm Tr} \big[ \sigma_{\boldsymbol{3|i;2}} V_Y(\boldsymbol{\theta}_{L}^{(2)}) \rho_{5L-3,\ell,n,j,+} V_Y(\boldsymbol{\theta}_{L}^{(2)})^\dag \big] \nonumber \\
&\quad -  {\rm Tr} \big[ \sigma_{\boldsymbol{3|i;2}} V_Y(\boldsymbol{\theta}_{L}^{(2)}) \rho_{5L-3,\ell,n,j,-} V_Y(\boldsymbol{\theta}_{L}^{(2)})^\dag \big] \Big)^2  \label{tqnn_appendix_lemma_partial_derivative_2_5} \\
\geq{}& \frac{\mathop{\E}\limits_{\boldsymbol{\theta}_{L}^{(1)}} \Big( {\rm Tr} \big[ \sigma_{\boldsymbol{3|i}} \rho_{5L-3,\ell,n,j,+} \big] -  {\rm Tr} \big[ \sigma_{\boldsymbol{3|i}} \rho_{5L-3,\ell,n,j,-} \big] \Big)^2}{2^{2S}} \label{tqnn_appendix_lemma_partial_derivative_2_6} \\
={}& \frac{1}{2^{2S}} \mathop{\E}\limits_{\boldsymbol{\theta}_{L}^{(1)}} \Big( {\rm Tr} \big[ \sigma_{\boldsymbol{3|i}} V_X(\boldsymbol{\theta}_{L}^{(1)}) \rho_{5L-4,\ell,n,j,+} V_X(\boldsymbol{\theta}_{L}^{(1)})^\dag \big] \nonumber \\
&\quad -  {\rm Tr} \big[ \sigma_{\boldsymbol{3|i}} V_X(\boldsymbol{\theta}_{L}^{(1)}) \rho_{5L-4,\ell,n,j,-} V_X(\boldsymbol{\theta}_{L}^{(1)})^\dag \big] \Big)^2  \label{tqnn_appendix_lemma_partial_derivative_2_7} \\
\geq{}& \frac{\Big( {\rm Tr} \big[ \sigma_{\boldsymbol{3|i}} \rho_{5L-4,\ell,n,j,+} \big] -  {\rm Tr} \big[ \sigma_{\boldsymbol{3|i}} \rho_{5L-4,\ell,n,j,-} \big] \Big)^2 }{2^{3S}}  \label{tqnn_appendix_lemma_partial_derivative_2_8} \\
={}& \frac{\Big( {\rm Tr} \big[ \sigma_{\boldsymbol{3|i}} \rho_{5L-5,\ell,n,j,+} \big] -  {\rm Tr} \big[ \sigma_{\boldsymbol{3|i}} \rho_{5L-5,\ell,n,j,-} \big] \Big)^2 }{2^{3S}}  \label{tqnn_appendix_lemma_partial_derivative_2_9},
\end{align}
where $\boldsymbol{3|i;2}$ denotes the index by replacing non-zero elements of $\boldsymbol{i}=(i_1,\cdots,i_N)$ with $3$ if the original value is $2$. 
Eq.~(\ref{tqnn_appendix_lemma_partial_derivative_2_4}) is obtained by using Lemma~\ref{tqnn_lemma_wawbwcwd} for the $R_X$ case. 
Eq.~(\ref{tqnn_appendix_lemma_partial_derivative_2_5}) follows from the definition of $\rho_{k,\ell,n,j,\pm}$ in Eq.~(\ref{tqnn_appendix_lemma_partial_derivative_upper_bound_rho_k_l_n}).
Eq.~(\ref{tqnn_appendix_lemma_partial_derivative_2_6}) is obtained by using Lemma~\ref{tqnn_lemma_wawbwcwd} for the $R_Y$ case. 
Eq.~(\ref{tqnn_appendix_lemma_partial_derivative_2_7}) follows from the definition of $\rho_{k,\ell,n,j,\pm}$ in Eq.~(\ref{tqnn_appendix_lemma_partial_derivative_upper_bound_rho_k_l_n}).
Eq.~(\ref{tqnn_appendix_lemma_partial_derivative_2_8}) is obtained by using Lemma~\ref{tqnn_lemma_wawbwcwd} for the $R_X$ case. 
Eq.~(\ref{tqnn_appendix_lemma_partial_derivative_2_9}) is derived by noticing that for $j,k \in \{0,3\}$, the term $\sigma_j \otimes \sigma_k$ remains the same after the CZ operation.

Next, we repeat the derivation in Eqs.(\ref{tqnn_appendix_lemma_partial_derivative_2_1})--(\ref{tqnn_appendix_lemma_partial_derivative_2_9}) inductively for parameters $(\boldsymbol{\theta}_{L-2}, \cdots, \boldsymbol{\theta}_{\ell+1})$, which yields
\begin{align}
{}&{} \mathop{\E}\limits_{\boldsymbol{\theta}} \left( \frac{\partial f}{\partial \theta_{\ell,n}^{(j)} } \right)^2 \nonumber \\
\geq{}& \frac{\mathop{\E}\limits_{\boldsymbol{\theta}_1} \cdots \mathop{\E}\limits_{\boldsymbol{\theta}_{\ell}} \Big( {\rm Tr} \big[ \sigma_{\boldsymbol{3|i}} \rho_{5\ell,\ell,n,j,+} \big] -  {\rm Tr} \big[ \sigma_{\boldsymbol{3|i}} \rho_{5\ell,\ell,n,j,-} \big] \Big)^2}{2^{3(L-\ell)S}}  \label{tqnn_appendix_lemma_partial_derivative_3_1} \\
={}& \frac{1}{8^{(L-\ell)S}} \mathop{\E}\limits_{\boldsymbol{\theta}_1} \cdots \mathop{\E}\limits_{\boldsymbol{\theta}_{\ell}}  \Big( {\rm Tr} \big[ \sigma_{\boldsymbol{3|i}} W_\ell ' (\boldsymbol{\theta}_\ell ')\rho_{5\ell-1,\ell,n,j,+} W_\ell ' (\boldsymbol{\theta}_\ell ')^\dag \big] \nonumber \\
&\quad -  {\rm Tr} \big[ \sigma_{\boldsymbol{3|i}} W_\ell ' (\boldsymbol{\theta}_\ell ')\rho_{5\ell-1,\ell,n,j,-} W_\ell ' (\boldsymbol{\theta}_\ell ')^\dag \big] \Big)^2  \label{tqnn_appendix_lemma_partial_derivative_3_4} \\
={}& \frac{1}{8^{(L-\ell)S}} \mathop{\E}\limits_{\boldsymbol{\theta}_1} \cdots \mathop{\E}\limits_{\boldsymbol{\theta}_{\ell-1}} \mathop{\E}\limits_{\boldsymbol{\theta}_{\ell}^{(1)}} \mathop{\E}\limits_{\boldsymbol{\theta}_{\ell}^{(2)}} \mathop{\E}\limits_{\boldsymbol{\theta}_{\ell}^{(3)}} \nonumber \\
&\quad \Big( {\rm Tr} \big[ \sigma_{\boldsymbol{3|i}} \rho_{5\ell-1,\ell,n,j,+}  \big] -  {\rm Tr} \big[ \sigma_{\boldsymbol{3|i}}  \rho_{5\ell-1,\ell,n,j,-} \big] \Big)^2 \label{tqnn_appendix_lemma_partial_derivative_3_5}.
\end{align}
Eq.~(\ref{tqnn_appendix_lemma_partial_derivative_3_4}) is derived by using the definition of $\rho_{k,\ell,n,j,\pm}$ in Eq.~(\ref{tqnn_appendix_lemma_partial_derivative_upper_bound_rho_k_l_n}).
Eq.~(\ref{tqnn_appendix_lemma_partial_derivative_3_5}) is derived by using Eq.~(\ref{tqnn_appendix_lemma_cost_w_trival}). 

Now we proceed to integrate parameter layers $(\boldsymbol{\theta}_\ell^{(1)}, \boldsymbol{\theta}_\ell^{(2)}, \boldsymbol{\theta}_\ell^{(3)})$ and $(\boldsymbol{\theta}_1, \cdots, \boldsymbol{\theta}_{\ell-1})$. Remark that the parameter $\theta_{\ell,n}^{(j)}$ is applied on the $n$th qubit with the observable $\sigma_3$ in Eq.~(\ref{tqnn_appendix_lemma_partial_derivative_3_5}). Since $\sigma_3$ anticommutes with $\sigma_1$ and $\sigma_2$, Hamiltonians of parameterized single-qubit gates in the circuit, we could apply Lemma~\ref{tqnn_lemma_wawbwcwd} and Lemma~\ref{tqnn_lemma_w_plus_minus} to simplify the formulation. Thus, {Eq.~(\ref{tqnn_appendix_lemma_partial_derivative_3_5})} could be further bounded
\begin{align}
\geq {}& \frac{\mathop{\E}\limits_{\boldsymbol{\theta}_1} \cdots \mathop{\E}\limits_{\boldsymbol{\theta}_{\ell-1}} 4  {\rm Tr} \big[ \sigma_{\boldsymbol{3|i}} \rho_{5\ell-5}  \big]^2}{8^{(L-\ell+1)S}}  \label{tqnn_appendix_lemma_partial_derivative_4_2} \\
\geq {}& \frac{4 {\rm Tr} \big[ \sigma_{\boldsymbol{3|i}} \rho_{\text{in}}  \big]^2}{8^{LS}} . \label{tqnn_appendix_lemma_partial_derivative_4_3}
\end{align}
Eq.~(\ref{tqnn_appendix_lemma_partial_derivative_4_2}) is obtained by noticing the similar collapsed formulation of Lemmas~\ref{tqnn_lemma_wawbwcwd} and \ref{tqnn_lemma_w_plus_minus} when the gate Hamiltonian anticommutes with the observable. 
Eq.~(\ref{tqnn_appendix_lemma_partial_derivative_4_3}) follows similar to Eq.(\ref{tqnn_appendix_lemma_cost_3_5}).
Thus, we prove Eq.~(\ref{tqnn_appendix_lemma_partial_derivative_equation}).

\end{proof}

\end{document}